\documentclass[11pt]{article}

\usepackage{times}
\usepackage[paperwidth=199.8mm,paperheight=297mm,centering,
            hmargin=20mm,vmargin=20mm]{geometry}
\usepackage{authblk}
\usepackage[bottom]{footmisc}
\usepackage[titletoc,title]{appendix}
\usepackage[T1]{fontenc}
\usepackage[utf8]{inputenc}
\usepackage{amsmath,amssymb,amsthm,mathtools}
\usepackage{microtype,booktabs,array}
\usepackage[shortlabels]{enumitem}
\usepackage{needspace,aliascnt}
\usepackage[dvipsnames]{xcolor}
\usepackage{tcolorbox}
\tcbuselibrary{skins,breakable}

\definecolor{myblue}{RGB}{0,68,116}
\definecolor{definitionframe}{RGB}{112,112,112}
\definecolor{definitionback}{RGB}{248,248,248}

\usepackage{hyperref}
\hypersetup{
  colorlinks=true,
  citecolor=myblue,
  linkcolor=myblue,
  filecolor=myblue,
  urlcolor=myblue,
  breaklinks=true
}
\usepackage[nameinlink,noabbrev]{cleveref}

\newtheorem{theorem}{Theorem}

\newaliascnt{lemma}{theorem}
\newtheorem{lemma}[lemma]{Lemma}
\aliascntresetthe{lemma}
\newaliascnt{proposition}{theorem}
\newtheorem{proposition}[proposition]{Proposition}
\aliascntresetthe{proposition}
\newaliascnt{corollary}{theorem}
\newtheorem{corollary}[corollary]{Corollary}
\aliascntresetthe{corollary}
\newtheoremstyle{boldremark}
  {6pt}{6pt}{\normalfont}{}{\bfseries}{.}{0.5em}{}
\theoremstyle{boldremark}
\newaliascnt{remark}{theorem}
\newtheorem{remark}[remark]{Remark}
\aliascntresetthe{remark}
\newtheorem*{remark*}{Remark}

\tcbset{
  supplement result/.style={
    enhanced,
    unbreakable,
    arc=2pt,
    outer arc=2pt,
    boxrule=0.55pt,
    left=6pt,
    right=6pt,
    oversize=0pt,
    top=6pt,
    bottom=6pt,
    before skip=9pt plus 2pt minus 1pt,
    after skip=9pt plus 2pt minus 1pt
  }
}
\newenvironment{mainresult}{%
  \begin{tcolorbox}[supplement result,
    colframe=myblue,colback=myblue!5!white]%
}{%
  \end{tcolorbox}%
}

\newcommand{\Tr}{\operatorname{Tr}}
\newcommand{\id}{\operatorname{id}}
\newcommand{\Var}{\operatorname{Var}}
\newcommand{\E}{\mathbb E}
\newcommand{\ket}[1]{|#1\rangle}
\newcommand{\bra}[1]{\langle#1|}

\allowdisplaybreaks[1]
\title{\Large\textbf{Approximate majorization and high-order capacity of \\ quantum depolarizing channels}}
\author[1]{Zhiwei Song \thanks{zhiweisong@cuhk.edu.cn}}
\author[2]{Xin Wang \thanks{felixxinwang@hkust-gz.edu.cn}}
\author[3,4]{Marco Tomamichel \thanks{marco.tomamichel@nus.edu.sg}}
\author[1]{Kun Fang \thanks{kunfang@cuhk.edu.cn}}

\affil[1]{\small{School of Data Science, The Chinese University of Hong Kong, Shenzhen,
Guangdong, 518172, China}}
\affil[2]{\small{Thrust of Artificial Intelligence, Information Hub, The Hong Kong University of Science \protect \\ and Technology (Guangzhou), Guangzhou 511453, China}}
\affil[3]{\small{Department of Electrical and Computer Engineering,
National University of Singapore,\protect \\ Singapore 117583, Singapore}}
\affil[4]{\small{Centre for Quantum Technologies, National University of Singapore, Singapore 117543, Singapore}}

\begin{document}
\maketitle

\begin{abstract}
A central challenge in quantum information theory is to determine whether entanglement provides an advantage for information processing when coherent operations across multiple channel uses are allowed. For depolarizing channels, King's proof of the additivity of the minimum output entropy rules out such an advantage for classical communication at leading asymptotic order, while leaving open the possibility of an advantage for higher orders. In this work, we conjecture a majorization relation for the outputs of depolarizing channels and prove an approximate version of this conjecture. Our proof combines an exact ordering of shifted log-determinants with an anti-concentration bound for the spectrum of the product-state output, refining King's analysis based on Schatten norms. As a direct application, we obtain a tight third-order expansion for the classical capacity of quantum depolarizing channels. Consequently, entangled codewords and collective decoding can improve the message size over the product-coding strategy by at most
$O(\log \log n)$ bits, sharpening our understanding of the role of entanglement in the finite-blocklength regime.
\end{abstract}

{\tableofcontents}

\section{Introduction}
Understanding how noise limits information processing is of central interest to
quantum information theory. Even the simple and highly symmetric quantum depolarizing channel leaves fundamental questions unresolved, including the quantum communication capacity~\cite{SSWR17,LDS18,JD24,ZZW24,zhu2025geometric}, distillable entanglement of its Choi state~\cite{LDS18,zhu2025geometric}, and the exact tensorization of hypercontractive and logarithmic Sobolev inequalities in general~\cite{KT13,Kin14,TPK14,BDR20,DGO26}. A central challenge underlying these questions is to determine whether entanglement provides any advantage in the corresponding tasks when coherent operations across multiple channel uses are allowed.

For transmitting classical messages through quantum depolarizing channels, King established the leading-order capacity by proving the multiplicativity of the maximal output Schatten norms and, consequently, the additivity of the minimum output entropy~\cite{Kin03}. Together with the Holevo--Schumacher--Westmoreland coding theorem~\cite{SW97,Hol98}, these results show that the capacity can be achieved by encoding messages into product states and by performing product measurements in the corresponding basis. Thus, neither entangled signal states nor entangled measurements provide an advantage for classical communication at leading asymptotic order.

Nevertheless, this leading-order optimality leaves open the role of entanglement in finite-blocklength communication. At a fixed error tolerance, entangled codewords might still increase the number of reliably transmissible messages through subleading corrections or accelerate the convergence of communication rates to capacity as the blocklength increases. Resolving this possibility requires a more precise understanding of the spectral properties of the channel outputs. A natural, albeit strong, conjecture is that the channel outputs obey a majorization ordering. This would strengthen King's multiplicativity result for Schatten norms to a statement covering all convex, unitarily invariant functionals.

Specifically, for a $d$-dimensional system, define the quantum depolarizing channel by
\begin{align}
 {\mathcal D}_{p,d}(X):=(1-p)X+\frac pd\Tr(X)I_d,\qquad
 0\le p\le p_{\max}(d):=\frac{d^2}{d^2-1},\quad d\ge2.
 \label{eq:channel}
\end{align} 
We conjecture that the output corresponding to a pure product input majorizes the output corresponding to any other input. More precisely, let $\rho^n$ denote an arbitrary, possibly entangled, input state, and let $\ket{0^n}:=\ket{0}^{\otimes n}$ denote a fixed pure product input. Then
\begin{align}
 \lambda^\downarrow\!\left({\mathcal D}_{p,d}^{\otimes n}(\rho^n)\right)
 \prec
 \lambda^\downarrow\!\left({\mathcal D}_{p,d}^{\otimes n}(\ket{0^n}\bra{0^n})\right).
 \label{eq:output-majorization-conjecture}
\end{align}
Here, $\lambda^\downarrow$ lists the eigenvalues in nonincreasing order, and $x\prec y$ means that $x$ is majorized by $y$.

Under this conjecture, every channel output could be obtained from the pure
product reference output ${\mathcal D}_{p,d}^{\otimes n}(\ket{0^n}\bra{0^n})$ by a mixed-unitary channel~\cite{Wat18}.
Since such channels preserve the maximally mixed state, data processing would
make the reference output optimal for hypothesis testing against that state.
The Wang--Renner meta-converse~\cite{WR12} would then turn this comparison
into a one-shot upper bound on the communication rate. Because the reference and maximally mixed
states commute and have tensor-product form, the meta-converse bound reduces to a classical hypothesis testing problem. Combining its known third-order
expansion~\cite{Tan14}
would determine the high-order communication rates.

Extensive numerical experiments support the majorization conjecture in~\eqref{eq:output-majorization-conjecture}, but proving it exactly remains challenging and requires
control of every leading eigenvalue sum for arbitrary entangled inputs. In this work, we prove an approximate version of the full majorization that is sufficiently accurate for obtaining the third-order asymptotics of the communication rates. The approximation is quantified through the spectral
defect $\delta_{\operatorname{maj}}(X,Y)$, which measures the largest
excess of a leading eigenvalue sum of $X$ over the corresponding sum of
$Y$. This defect vanishes exactly when $X$ is majorized by $Y$.

\begin{mainresult}
\begin{theorem}[Approximate majorization]\label{thm:approximation-intro}
Fix $d\ge2$ and $0<p\le p_{\max}(d)$ with $p\ne1$.
There exists $\kappa_{d,p}>0$ such that, for all sufficiently large $n$
and every state $\rho^n$ on $(\mathbb C^d)^{\otimes n}$,
\begin{align}
 \delta_{\operatorname{maj}}\!\left({\mathcal D}_{p,d}^{\otimes n}(\rho^n),{\mathcal D}_{p,d}^{\otimes n}(\ket{0^n}\bra{0^n})\right)
 \le\delta_n:=\frac{\kappa_{d,p}}{\sqrt n}
                    (\log \log n+6).
 \label{eq:approximation-intro}
\end{align}
The constant $\kappa_{d,p}$ and the starting blocklength are independent
of $\rho^n$.
\end{theorem}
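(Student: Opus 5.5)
The plan is to reduce the spectral defect to a family of threshold (hockey-stick) functionals and to compare these via smooth concave test functions whose ordering can be proved exactly. Write $\omega:={\mathcal D}_{p,d}^{\otimes n}(\rho^n)$ and $\sigma:={\mathcal D}_{p,d}^{\otimes n}(\ket{0^n}\bra{0^n})$. The reference output $\sigma$ is diagonal with eigenvalues $a^{k}b^{n-k}$, where $a=1-p+p/d$ and $b=p/d$, and the eigenvalue $a^kb^{n-k}$ has multiplicity $\binom nk(d-1)^{n-k}$. Its log-spectrum, weighted by eigenvalue mass, is therefore a binomial law $\mathrm{Bin}(n,a)$ rescaled by $\log(a/b)$. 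Since $p\neq 1$ we have $a\neq b$, so this law has spread of order $\sqrt n$ on the log scale.

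\textbf{Step 1: reduce to thresholds.} By Ky Fan's principle, $\sum_{i\le m}\lambda_i^\downarrow(X)=\min_{s\ge0}\{ms+\Tr(X-s)_+\}$. Choosing $s=\lambda_m^\downarrow(\sigma)$ gives
\[
\delta_{\operatorname{maj}}(\omega,\sigma)\le\sup_{s>0}\bigl[\Tr(\omega-s)_+-\Tr(\sigma-s)_+\bigr].
\]
It therefore suffices to bound this supremum by $\delta_n$ uniformly in $\rho^n$.

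\textbf{Step 2: exact ordering for smooth concave functionals.} I would next prove an exact ordering for a family of concave test functions. The natural family is the shifted log-determinants: for every $t>0$,
\[
\Tr\log(tI+\omega)\ge\Tr\log(tI+\sigma).
\]
Equivalently, $\log\det(tI+\omega)\ge\log\det(tI+\sigma)$, which is what full majorization would predict for the concave function $x\mapsto\log(t+x)$. The route I would attempt refines King's argument. First write $\omega=\sum_{S}(1-p)^{|S|}p^{n-|S|}\,\rho_S\otimes (I/d)^{\otimes S^c}$, expanding the depolarizing channel as a mixture over subsets $S$ of untouched sites. Then use operator concavity of $\log$ together with a Lieb--Thirring/Hölder-type step as in King's multiplicativity proof to reduce to product inputs. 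Alternatively, one can differentiate King's Schatten-norm multiplicativity in $q$ and integrate against a suitable kernel in $t$.

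I expect this step to be the main obstacle. The shift $tI$ does not factor across the $n$ tensor factors, so King's single-letter argument cannot be copied directly, and handling the shift is exactly where an inductive or interpolation argument must be designed carefully. If a direct proof fails, the fallback is to use only King's bounds $\|\omega\|_q\le\|\sigma\|_q$ for $q\ge1$. This would make Step 3 lossier, replacing the $\log\log n$ factor by a larger loss.

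\textbf{Step 3: approximate hockey sticks and anti-concentration.} The hinge $x\mapsto(x-s)_+$ is convex, so I would sandwich it between combinations of the test functions from Step 2. Concretely, write
\[
(x-s)_+\le \phi_{s,\varepsilon}(x)+\text{error},
\]
where $\phi_{s,\varepsilon}$ is a positive combination of $-\log(t+x)$ and affine terms with $t$ ranging over $[s e^{-L},s e^{L}]$. The approximation error is supported on eigenvalues $x$ with $|\log(x/s)|\lesssim L$, and the tail contributions decay like $e^{-L}$.

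Applying this sandwich to both $\omega$ and $\sigma$ and using Step 2 gives
\[
\Tr(\omega-s)_+-\Tr(\sigma-s)_+\lesssim \sigma\text{-mass in the window }|\log(x/s)|\le L \;+\; e^{-L}.
\]
The window mass is bounded by anti-concentration. The $\sigma$-weighted log-spectrum is binomial, so a local limit theorem or Berry--Esseen-type estimate bounds the mass in any window of log-width $L$ by $\kappa L/\sqrt n$, uniformly in $s$.

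Optimising the window size, I would take $L\approx\log\log n$ (with the tail handled as $e^{-L}$ times the relevant scale) plus constants. This yields the bound $\kappa_{d,p}(\log\log n+6)/\sqrt n$ for all large $n$. Since every estimate in Steps 2 and 3 involves only $\sigma$ and universal inequalities, both the constant $\kappa_{d,p}$ and the starting blocklength are independent of $\rho^n$.
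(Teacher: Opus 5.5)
\textbf{There is a genuine gap, located in Step~2.} Your Step~1 reduction is correct. The shifted log-determinant ordering you isolate in Step~2 is exactly the inequality the paper's argument rests on, but you do not prove it, and none of the routes you sketch can work.

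\emph{Subset expansion.} Expanding $\mathcal D_{p,d}^{\otimes n}$ over subsets $S$ and using concavity gives only $\mathcal L_t(\omega)\ge\sum_S w_S\,\mathcal L_t\bigl(\rho_S\otimes(I/d)^{\otimes S^c}\bigr)$. Every term on the right is minimized simultaneously by a pure product input. At that input the right-hand side is strictly below $\mathcal L_t(\sigma)$ by strict concavity of $\ln\det$, so the chain can never close. Moreover, for $1<p\le p_{\max}(d)$ the weights $(1-p)^{|S|}p^{n-|S|}$ are not even nonnegative.

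\emph{Schatten norms.} No argument that uses only King's Schatten-norm bounds can produce the ordering. The states $W_n$ of Proposition~\ref{prop:norm-nonvanishing-defect} satisfy all those bounds yet have defect tending to $1/2$. Log-determinant ordering would force a vanishing defect for them via Proposition~\ref{prop:logdet-majorization}. For the same reason your fallback is not merely lossier; it gives no vanishing bound at all.

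\emph{The missing idea} is the induction of Theorem~\ref{thm:spectral}. Rotate so that the first marginal is diagonal. Apply King's decomposition of $\mathcal D_{p,d}$ into unitarily conjugated uniform phase-damping channels (valid up to $p_{\max}(d)$), together with concavity, to the first factor only. Bound $\mathcal L_t\bigl((\Delta_{p,d}\otimes\id)(R)\bigr)$ from below by the diagonal blocks, using Sylvester's identity and concavity of $\ln\det$ (Lemma~\ref{lem:block}). These blocks are $\frac1d\mathcal D_{p,d}^{\otimes(n-1)}$ applied to normalized conditional states. Finally use $\mathcal L_t(cX)=m\ln c+\mathcal L_{t/c}(X)$, so that the hypothesis, which holds for all shifts at once, is invoked at $t/\alpha_p$ and $t/\beta_p$. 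This rescaling is precisely how the non-factorizing shift you worry about is handled.

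\textbf{A secondary problem: Step~3 does not deliver the claimed rate as written.} With an error of the form (window mass $\lesssim\kappa L/\sqrt n$) $+\,e^{-L}$, the choice $L\approx\log\log n$ leaves a term of order $1/\log n$. The best trade-off in that form is only $O(\log n/\sqrt n)$.

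Two scales must be separated. Write $q:=Q(S_\sigma,1)$. Because $\omega$ is arbitrary, the kink of the test function $x-\phi_t(x)$ must sit at $t=s/L$ with $L=\ln(e/q)\approx\frac12\ln n$. Then the $\omega$-side error satisfies
$\sum_{x_j>s}[\phi_t(x_j)-s]\le\sup_{v>0}\bigl[(v/L)\ln(1+L/v)-v\bigr]\le e^{-L}=O(q)$.
On the $\sigma$ side, what is lost is essentially the mass in $(t,s)$, a log-window of width only $\ln L\approx\ln\ln n$. Eigenvalues above $s$, where the error is nonzero across a window of width about $L$, cannot be bounded by window mass without costing $qL\approx\log n/\sqrt n$. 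They must instead be controlled by the weighted count $uN_\sigma(u)\le2q$, obtained from the geometric sum over unit windows.

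With these fixes your hockey-stick route gives $q[O(1)+2\ln\ln(e/q)]$, matching Proposition~\ref{prop:logdet-majorization}. The paper reaches the same bound through the head/tail split of $\mathcal K_k$ with $t=y_k/L$.
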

\end{mainresult}

Our proof begins with an exact ordering of shifted log-determinants between
arbitrary channel outputs and the pure product reference:
\begin{align}
  \ln \det(tI + {\mathcal D}_{p,d}^{\otimes n}(\rho^n)) \geq  \ln \det(tI + {\mathcal D}_{p,d}^{\otimes n}(\ket{0^n}\bra{0^n})), \quad \forall t > 0.
\end{align} 
This ordering is a necessary condition for the full majorization conjectured in~\eqref{eq:output-majorization-conjecture} since the log-determinant is a concave, unitarily invariant spectral functional. At the same time, it is sufficient to recover King's Schatten-norm ordering for $1<q\leq 2$. More importantly, by rescaling the log-determinant and choosing the shift parameter $t$ appropriately, we obtain an approximate majorization relation with a uniform defect bounded by
$O_{d,p}(\log \log n/\sqrt n)$. As shown in Appendix~\ref{app:norm-defect}, such approximate majorization cannot be obtained from Schatten-norm ordering alone. This necessitates the stronger log-determinant ordering, which provides finer control over the output spectra.

As a direct application, the approximate majorization 
yields a tight third-order expansion for the classical capacity of quantum
depolarizing channels. Let $M^*(n,\varepsilon)$ be the largest number of
messages transmissible over $n$ channel uses with average error at most
$\varepsilon$. Arbitrary entangled codewords and
collective decoding are allowed. 

\begin{mainresult}
\begin{theorem}[Third-order classical capacity]\label{thm:main}
For $d\ge2$, $0<p\le p_{\max}(d)$ with $p\ne1$, and
$\varepsilon\in(0,1)$,
\begin{align}
 \log M^*(n,\varepsilon)
 =nc_{d,p}+\sqrt{nv_{d,p}}\,\Phi^{-1}(\varepsilon)
       +\frac12\log n+O_{d,p,\varepsilon}(\log \log n)
 \label{eq:main}
\end{align}
as $n\to\infty$, where $\Phi$ is the standard normal cumulative distribution function. The classical capacity $c_{d,p}$ and
dispersion $v_{d,p}$ are given explicitly later in Eq.~\eqref{eq:parameters}.
\end{theorem}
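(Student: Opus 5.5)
The plan is to establish matching achievability and converse bounds, each equal to $nc_{d,p}+\sqrt{nv_{d,p}}\,\Phi^{-1}(\varepsilon)+\frac12\log n$ up to $O(\log\log n)$.

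\emph{Achievability.} Encode into product pure states drawn from an orthonormal basis, so each use acts as the classical $d$-ary symmetric channel $x\mapsto$ output $x$ with probability $1-p+p/d$ and each other symbol with probability $p/d$. Decoding with product measurements in that basis reduces the problem to a classical DMC. This channel is symmetric with uniform capacity-achieving input and positive dispersion $v_{d,p}$ when $p\neq 1$. The known third-order expansion for such channels (Polyanskiy--Poor--Verdú and Tomamichel--Tan / Moulin), $nC+\sqrt{nV}\Phi^{-1}(\varepsilon)+\frac12\log n+O(1)$, then gives the lower bound. I expect $c_{d,p}=\log d-H(\lambda)$ and $v_{d,p}=\Var$ of $-\log\lambda_i$, where $\lambda$ is the output spectrum of ${\mathcal D}_{p,d}(\ket0\bra0)$.

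\emph{Converse.} Use the Wang--Renner meta-converse:
\[
\log M^*(n,\varepsilon)\le \sup_{\rho^n} D_H^{\varepsilon}\bigl({\mathcal D}_{p,d}^{\otimes n}(\rho^n)\,\big\|\,I/d^n\bigr).
\]
Here the hypothesis-testing quantity is optimized over inputs, and the classical-quantum form is handled by choosing the product of the average output with $I/d^n$ as the dummy state. This is legitimate because the depolarizing channel is covariant, so the uniform mixture over Weyl twirls makes the marginal maximally mixed. Against $I/d^n$, the optimal test of size $\le 1-\varepsilon$ yields
\[
D_H^\varepsilon(\sigma\|I/d^n)=n\log d-\log\min\{\Tr Q:\Tr Q\sigma\ge1-\varepsilon\}.
\]
Up to rounding of a single eigenvalue fraction, this is governed by how few eigenvalues of $\sigma$ are needed to capture mass $1-\varepsilon$, i.e., by the leading-eigenvalue sums. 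Theorem~\ref{thm:approximation-intro} says each leading sum of $\sigma={\mathcal D}_{p,d}^{\otimes n}(\rho^n)$ exceeds that of the reference $\tau={\mathcal D}_{p,d}^{\otimes n}(\ket{0^n}\bra{0^n})$ by at most $\delta_n$. Hence
\[
D_H^{\varepsilon}(\sigma\|I/d^n)\le D_H^{\varepsilon+\delta_n}(\tau\|I/d^n)+O(1).
\]
A fractional eigenvalue is absorbed by randomized tests, which the one-shot hypothesis-testing divergence permits.

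The remaining problem is classical: i.i.d. testing of $\lambda^{\otimes n}$ against uniform. Its expansion is $n c_{d,p}+\sqrt{nv_{d,p}}\,\Phi^{-1}(\varepsilon')+\frac12\log n+O(1)$, uniformly for $\varepsilon'$ in a compact subinterval of $(0,1)$; this is Tan's third-order result for non-lattice distributions. The lattice case needs care, since $-\log\lambda_i$ takes only two values, so the log-likelihood is lattice. In that case I would use the lattice version of Strassen's expansion, which still gives $\frac12\log n+O(1)$. Substituting $\varepsilon'=\varepsilon+\delta_n$ and Taylor-expanding $\Phi^{-1}$ contributes $\sqrt n\,O(\delta_n)=O(\log\log n)$, which matches the claimed error.

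The main obstacle is making the meta-converse reduction exact. The sup over entangled inputs has to land on a hypothesis test against precisely $I/d^n$ with no $\sqrt n$-scale loss. The conversion from spectral defect to the $\varepsilon$-shift must also hold uniformly in $\rho^n$, using that the constant in Theorem~\ref{thm:approximation-intro} and its starting blocklength are input-independent.
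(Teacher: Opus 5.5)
Your achievability argument and the endgame of your converse match the paper. The endgame here means the uniform third-order expansion of $D_H^{\varepsilon'}(\tau^{\otimes n}\|\pi_{d^n})$ on compact subintervals, with $\tau={\mathcal D}_{p,d}(\ket{0}\bra{0})$, followed by $\sqrt n\,\delta_n=O(\log\log n)$. However, the first step of your converse is not valid as stated.

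The theorem is about \emph{average} error. For average-error codes, the Wang--Renner meta-converse bounds $\log M$ by $D_H^\varepsilon(\rho_{XB}\|\rho_X\otimes\pi_{d^n})$ with $\rho_{XB}=\frac1M\sum_x\ket{x}\bra{x}\otimes{\mathcal D}_{p,d}^{\otimes n}(\rho_x^n)$. The test splits into per-message tests whose errors $\varepsilon_x$ only average to $\varepsilon$. Replacing this by $\sup_{\rho^n}D_H^\varepsilon({\mathcal D}_{p,d}^{\otimes n}(\rho^n)\|\pi_{d^n})$ at the fixed level $\varepsilon$ would need $\varepsilon\mapsto\inf_{\rho}\beta_\varepsilon({\mathcal D}_{p,d}^{\otimes n}(\rho)\|\pi_{d^n})$ to be convex, and an infimum of convex functions need not be.

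The sup form is in fact false in general. Take a classical channel $\mathcal N$ on $D\ge4$ symbols whose outputs are either uniform on a $3$-set, or put mass $\frac12$ on one symbol and spread the remaining $\frac12$ evenly over the other $D-1$. Use one codeword uniform on a fixed $3$-set $T$ and $D-3$ atom-type codewords with atoms outside $T$, decoded pointwise. This gives $M=D-2$ at average error $\varepsilon_0=\frac{D-3}{2(D-2)}$. By contrast, $\sup_\rho D_H^{\varepsilon_0}(\mathcal N(\rho)\|\pi_D)=\log\frac{2D(D-2)}{3(D-1)}<\log(D-2)$.

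Expurgating to maximal error does not rescue the step either. It costs $\log\frac{\varepsilon+\gamma}{\gamma}$ bits: an extra $\frac12\log n$ for $\gamma\sim n^{-1/2}$, or a shift of the $\sqrt n$ term for constant $\gamma$. Your covariance/Weyl-twirl remark addresses a non-issue, since $\sigma_B=\pi_{d^n}$ is admissible for every channel.

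The missing idea, which is exactly the paper's Lemma~\ref{lem:testing}, is to compare each codeword with the fixed reference \emph{before} averaging over messages. By Eq.~\eqref{eq:kyfan-variational} and the uniform defect bound,
\[
1-\varepsilon\le\frac1M\sum_{x}\mathcal K_{\Tr\Lambda_x}\bigl({\mathcal D}_{p,d}^{\otimes n}(\rho_x^n)\bigr)\le\frac1M\sum_{x}\mathcal K_{\Tr\Lambda_x}(\tau^{\otimes n})+\delta_n\le\mathcal K_{d^n/M}(\tau^{\otimes n})+\delta_n,
\]
where the last step uses concavity of $u\mapsto\mathcal K_u(\tau^{\otimes n})$ and $\sum_x\Tr\Lambda_x=d^n$. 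Equivalently, the single convex curve $\varepsilon\mapsto\beta_{\varepsilon+\delta_n}(\tau^{\otimes n}\|\pi_{d^n})$ lies below every per-input curve. It therefore survives the averaging, even though the infimum of those curves need not be convex.

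This yields $\log M\le D_H^{\varepsilon+\delta_n}(\tau^{\otimes n}\|\pi_{d^n})$ directly. There is no intermediate supremum and no $O(1)$ rounding loss, since the interpolated sums $\mathcal K_u$ already encode randomized tests. With this substitution the rest of your argument goes through.

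Your lattice concern is also unnecessary. The upper bound $D_H^{\varepsilon'}(\tau^{\otimes n}\|\pi_{d^n})\le nc_{d,p}+\sqrt{nv_{d,p}}\,\Phi^{-1}(\varepsilon')+\frac12\log n+O(1)$ needed for the converse follows from Berry--Esseen alone, with no non-lattice assumption.
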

\end{mainresult}

This extends King's capacity formula by determining both the Gaussian and logarithmic corrections at fixed error tolerance. Encoding and measuring in an orthonormal basis achieves all three terms, up to a bounded remainder~\cite{Pol10}. Consequently, entangled codewords and collective decoding can improve the logarithm of the message size over this product-coding strategy by at most
$O_{d,p,\varepsilon}(\log \log  n)$, thereby sharpening our understanding of the advantage provided by entanglement in the finite-blocklength regime.

\section{Preliminaries}\label{sec:preliminaries}

\subsection{Notation and definitions}\label{sec:notation}
All Hilbert spaces are finite dimensional. We write $I_m$ for the identity
on an $m$-dimensional space and $\pi_D:=I_D/D$ for the maximally mixed
state on a $D$-dimensional space.
The identity map is denoted by $\id$. A state is a positive
semidefinite operator of trace one. An effect is an operator $0\le Q\le I$.
The trace is unnormalized. We use $\ln$ for natural logarithms and
$\log $ for logarithms in base two.
We write $S(X):=-\Tr(X\log X)$ for the von Neumann entropy, with
$0\log 0:=0$. For $X\ge0$, the Schatten norms are
$\|X\|_q:=(\Tr X^q)^{1/q}$ for $1\le q<\infty$ and
$\|X\|_\infty:=\lambda_{\max}(X)$.

For any single-use pure input, the channel output has eigenvalues
$\alpha_p:=1-p+p/d$ and $d-1$ copies of $\beta_p:=p/d$.
We take $\ket{0}$ to be the first computational-basis vector.
The capacity and dispersion are
\begin{align}
 c_{d,p}&:=\log d+\alpha_p\log \alpha_p
              +(d-1)\beta_p\log \beta_p,\nonumber\\
 v_{d,p}&:=\alpha_p(1-\alpha_p)
       \left(\log  \alpha_p - \log  \beta_p\right)^2.
 \label{eq:parameters}
\end{align}
At $p=0$, set $v_{d,0}:=0$ by continuity. For
$0<p\le p_{\max}(d)$ with $p\ne1$, both $c_{d,p}$ and $v_{d,p}$
are strictly positive. Our asymptotic statements keep the channel
parameters and error tolerance fixed.\footnote{Under the average-error
convention, the degenerate cases $p=0,1$ have optimal message sizes
$\lfloor d^n/(1-\varepsilon)\rfloor$ and
$\lfloor1/(1-\varepsilon)\rfloor$, respectively. The remainder in
Eq.~\eqref{eq:main} is not claimed to be uniform near degenerate
channel or error parameters.}

\subsection{Approximate majorization}

Write $\lambda^\downarrow(X)$ for the eigenvalues of a state $X$ in
nonincreasing order. For real vectors $x,y\in\mathbb R^m$, let
$x^\downarrow$ and $y^\downarrow$ be their nonincreasing rearrangements.
We write $x\prec y$ if
$\sum_{j=1}^k x_j^\downarrow\le\sum_{j=1}^k y_j^\downarrow$ for
$1\le k<m$ and $\sum_{j=1}^m x_j=\sum_{j=1}^m y_j$.
Thus $x\prec y$ means that $x$ is majorized by $y$.

For a $D$-dimensional state $X$ with eigenvalues
$\lambda_1(X)\ge\cdots\ge\lambda_D(X)$, let
$\mathcal K_k(X):=\sum_{j=1}^k\lambda_j(X)$ for $k=1,\ldots,D$, with
$\mathcal K_0(X):=0$. For $k\le u\le k+1$ and $k=0,\ldots,D-1$, define
\begin{align}
 \mathcal K_u(X):=\mathcal K_k(X)+(u-k)\lambda_{k+1}(X).
 \label{eq:kyfan-interpolation}
\end{align}
Thus $\mathcal K_u(X)$ linearly interpolates the consecutive integer partial sums. It is nondecreasing and concave in $u$.
Its variational characterization is
\begin{align}
 \mathcal K_u(X)=\max_{\substack{0\le Q\le I_D\\ \Tr Q\le u}}\Tr(QX),
 \qquad 0\le u\le D.
 \label{eq:kyfan-variational}
\end{align}

For two states on the same space, define the one-sided spectral defect
\begin{align}
 \delta_{\operatorname{maj} }(X,Y):=\max_{0\le k\le D}\{\mathcal K_k(X)-\mathcal K_k(Y)\}.
 \label{eq:majorization-defect}
\end{align}

The maximum is over integer $k$ and is nonnegative because
$\mathcal K_0(X)=\mathcal K_0(Y)=0$. Linear interpolation gives
$\mathcal K_u(X)\le \mathcal K_u(Y)+\delta_{\operatorname{maj}}(X,Y)$ for every $u\in[0,D]$.
In particular, $\delta_{\operatorname{maj}}(X,Y)=0$ means that the spectrum of $Y$
majorizes that of $X$.
Equivalently, $\delta_{\operatorname{maj}}(X,Y)$ is the largest vertical
excess of the upper Lorenz curve of $X$ above that of $Y$.

\section{Approximate majorization for depolarizing outputs}

An orthonormal basis $\{\ket{u_i}\}_{i=1}^d$ is \emph{uniform}
relative to the computational basis if
$|\langle j|u_i\rangle|^2=1/d$ for all $i,j$.
Set $P_i:=\ket{u_i}\bra{u_i}$ and define the
\emph{phase-damping channel}
\begin{align}
 \Delta_{p,d}(X):=(1-p)X+p\sum_{i=1}^dP_iXP_i,
 \qquad 0\le p\le\frac{d}{d-1}.
 \label{eq:phase}
\end{align}
This channel preserves diagonal entries and scales off-diagonal entries
by $1-p$ in the indicated basis. King's decomposition
\cite{Kin03} expresses the depolarizing
channel as a convex combination of unitary conjugates of uniform
phase-damping channels. This decomposition will also be a key ingredient
in the log-determinant comparison below.
Specifically, for every $0\le p\le p_{\max}(d)$, there exist nonnegative
weights $c_a$, unitaries $V_a$, and uniform bases
$\{\ket{u_i^{(a)}}\}_{i=1}^d$, indexed by
$a=1,\ldots,2d^2(d+1)$, with $\sum_a c_a=1$, such that
\begin{align}
 {\mathcal D}_{p,d}(X)=\sum_{a=1}^{2d^2(d+1)} c_aV_a^\dagger\Delta_{p,d}^{(a)}(X)V_a
 \label{eq:king}
\end{align}
for every operator $X$, where $\Delta_{p,d}^{(a)}$ denotes the phase-damping
channel in Eq.~\eqref{eq:phase} associated with the $a$th uniform basis.
The nonnegative coefficients are obtained by regrouping the
identity-phase contribution and may vanish at the endpoints.

\subsection{Log-determinant ordering for depolarizing outputs}\label{sec:spectral}

For a positive semidefinite operator $X$ and $t>0$, define
\begin{align}
 \mathcal L_t(X):=\ln\det(tI+X),
 \label{eq:logdet-definition}
\end{align}
where $I$ acts on the same space as $X$.
The map $\mathcal L_t$ is concave and unitarily invariant
\cite{BV04}; the positive shift ensures that it is finite
even when $X$ is singular. For states, $\mathcal L_t$ also admits an interpretation in terms of
the $f_s$-divergence from the maximally mixed state.
For $s>0$, let $f_s(u):=\ln(1+s)-\ln(s+u)$ for $u\ge0$, and for a $D$-dimensional state $\omega$,
\begin{align}
 D_{f_s}(\omega\|\pi_D)
 =\ln(1+s)-\frac1D\ln\det(sI_D+D\omega)=\ln\frac{1+s}{D}-\frac1D\mathcal L_{s/D}(\omega).
 \label{eq:logdet-divergence}
\end{align}

The following lemma bounds the log-determinant of the output of the phase-damping channel solely in terms of its block-diagonal part. It parallels a bound in King's analysis~\cite{Kin03}. However, the key ingredients here are Sylvester's determinant identity and the concavity of the log-determinant, rather than the Lieb--Thirring inequality used in King's proof.

\begin{lemma}[Block determinant inequality]\label{lem:block}
Let $\Delta_{p,d}$ be a phase-damping channel with respect to the uniform basis $\{\ket{u_i}\}_{i=1}^d$. Let $R\ge0$ act on $\mathbb C^d\otimes\mathbb C^m$ and $R_{ij} = (\bra{u_i} \otimes I)R(|u_j\rangle\otimes I)$ be the $(i,j)$-th block.
For $0\le p\le p_{\max}(d)$ and $t>0$,
\begin{align}
 \mathcal L_t\bigl((\Delta_{p,d}\otimes\id)(R)\bigr)
 &\ge\frac1d\sum_{i=1}^d\big[
 \mathcal L_t(d\alpha_pR_{ii})
 +(d-1)\mathcal L_t(d\beta_pR_{ii})\big].
 \label{eq:block}
\end{align}
\end{lemma}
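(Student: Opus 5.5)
The plan is to reduce \eqref{eq:block} to a Gram-matrix identity via Sylvester's determinant identity, and then apply the concavity of $\mathcal L_t$ to an explicit convex decomposition whose pieces have the spectra $\{d\alpha_p R_{ii},\,d\beta_pR_{ii}\}$. Not every step below has been checked.

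\textbf{Step 1: Gram form.} Factor $R=AA^\dagger$ with $A:\mathbb C^r\to\mathbb C^d\otimes\mathbb C^m$, and set $A_i:=(\bra{u_i}\otimes I)A$, so that $R_{ij}=A_iA_j^\dagger$. In the $u$-basis, $(\Delta_{p,d}\otimes\id)(R)$ has diagonal blocks $A_iA_i^\dagger$ and off-diagonal blocks $(1-p)A_iA_j^\dagger$.

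For $p\le1$, write this output as $\Gamma\Gamma^\dagger$ with
\[
\Gamma=\bigl[\sqrt{1-p}\,A,\ \sqrt p\,(P_1\otimes I)A,\ \ldots,\ \sqrt p\,(P_d\otimes I)A\bigr].
\]
Sylvester's identity $\det(tI+\Gamma\Gamma^\dagger)=t^{\dim_1-\dim_2}\det(tI+\Gamma^\dagger\Gamma)$ then turns the left side of \eqref{eq:block} into a log-determinant of
\[
\Gamma^\dagger\Gamma=\sum_{i=1}^d C_i\otimes S_i,\qquad S_i:=A_i^\dagger A_i,
\]
where $C_i$ are explicit rank-one $(d+1)\times(d+1)$ matrices. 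The nonzero spectrum of $S_i$ coincides with that of $R_{ii}$. The same bookkeeping of powers of $t$ applies on the right side of \eqref{eq:block}, since $\mathcal L_t(cR_{ii})$ only sees the nonzero eigenvalues of $R_{ii}$ up to additive $\ln t$ terms.

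For $p>1$ the factor $\sqrt{1-p}$ is unavailable. In that regime I would instead use the block form $(1-p)R+p\bigoplus_iR_{ii}$ directly and absorb the negative part using $R\le d\bigoplus_iR_{ii}$. This operator inequality follows from Cauchy--Schwarz for PSD block matrices, and the condition $p\le p_{\max}(d)$ should be exactly what keeps the resulting coefficients admissible.

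\textbf{Step 2: symmetrization and concavity.} Let $W_k:=\sum_i\omega^{ik}P_i$ be the Fourier (clock) unitaries in the $u$-basis, so that $\frac1d\sum_kW_kRW_k^\dagger=\bigoplus_iR_{ii}$. Uniformity of $\{\ket{u_i}\}$ relative to the computational basis provides the complementary shift unitaries. Averaging over the generalized Pauli group built from these two families, I would try to express $(\Delta_{p,d}\otimes\id)(R)$, or its Gram counterpart $\sum_iC_i\otimes S_i$, as a uniform mixture $\frac1d\sum_iX_i$. Here each $X_i$ should be unitarily equivalent to
\[
\bigl(\alpha_p\ket{e}\bra{e}+\beta_p(I_d-\ket{e}\bra{e})\bigr)\otimes dR_{ii},
\]
with $\ket e$ a unit vector. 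This decomposition is the pure-input depolarizing output tensored with $d R_{ii}$, and is exactly what makes the single-use pure-product case hold with equality. Concavity and unitary invariance of $\mathcal L_t$ then give
\[
\mathcal L_t\Bigl(\tfrac1d\sum_iX_i\Bigr)\ge\frac1d\sum_i\mathcal L_t(X_i),
\]
and the spectrum of $X_i$ is $\{d\alpha_pR_{ii}\}$ together with $d-1$ copies of $\{d\beta_pR_{ii}\}$. This is precisely the right-hand side of \eqref{eq:block}.

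\textbf{Main obstacle.} The hard step is producing the decomposition in Step 2, namely making the $\alpha_p/\beta_p$ splitting appear instead of the cruder bound $\frac1d\sum_i\mathcal L_t(dR_{ii})$. That cruder bound is what naive concavity applied to the rank-one $C_i$ yields.

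My first attempt would be to check the case $m=1$ with $R=\ket\psi\bra\psi$, where everything is explicit. I would look for the decomposition there, guided by the equality case $\psi=\ket0$ (for which $|\langle u_i|0\rangle|^2=1/d$). I would then lift it blockwise via the Gram/Sylvester reduction, with the operators $S_i$ playing the role of the scalars $|\psi_i|^2$. If no exact mixture exists, the fallback is to compare $\Gamma^\dagger\Gamma$ with $\bigoplus_i(\alpha_p\text{-}/\beta_p\text{-weighted})\,S_i$ in Loewner order, using $\sum_iS_i=A^\dagger A$, and then conclude by monotonicity of $\mathcal L_t$.
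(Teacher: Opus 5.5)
There is a genuine gap, and it sits in Step 2. Your Step 1 is correct. You also correctly diagnose that concavity applied directly to $\sum_iC_i\otimes S_i$ only yields the bound with $(\alpha_p,\beta_p)$ replaced by $(1,0)$. But the decomposition Step 2 asks for does not exist. Write $\tau_e:=\alpha_p\ket e\bra e+\beta_p(I_d-\ket e\bra e)$; then no argument of the form ``compare the output with a fixed mixture $\frac1d\sum_iX_i$, $X_i\cong\tau_e\otimes dR_{ii}$, then use concavity'' can work.

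To see this, take $m=1$ and $R=\ket{u_1}\bra{u_1}$.
\begin{itemize}
\item Phase damping fixes basis vectors, so the output, and hence $\Gamma^\dagger\Gamma$, has rank one. By contrast, $X_1\cong d\tau_e$ has rank $d$ for every $p>0$. A convex combination of positive semidefinite operators has rank at least that of each summand carrying positive weight, so the mixture cannot equal the output.
\item The would-be mixture is unitarily equivalent to $\tau_e$. The output has $\mathcal L_t=\ln(t+1)+(d-1)\ln t$, which is strictly smaller than $\mathcal L_t(\tau_e)$ by strict Schur-concavity. So the output is not even $\mathcal L_t$-above the mixture.
\item The lemma holds here only through the scalar Jensen gap $\ln(t+1)\ge\frac1d\ln(t+d\alpha_p)+\frac{d-1}d\ln(t+p)$. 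It does not come from any ordering between the output and a mixture.
\end{itemize}
Your other two suggestions fail as well.
\begin{itemize}
\item The Loewner fallback: anything below a rank-one operator has rank at most one, so it cannot carry both $\alpha_p$- and $\beta_p$-weighted copies of $S_1$. More generally, the right-hand side is $\frac1d$ times a log-determinant on a $d$-fold larger space, not $\mathcal L_t$ of a fixed operator on the output space.
\item The $p>1$ plan: $R\le d\bigoplus_iR_{ii}$ only gives $(\Delta_{p,d}\otimes\id)(R)\ge d\alpha_p\bigoplus_iR_{ii}$, hence $\sum_i\mathcal L_t(d\alpha_pR_{ii})$. This is weaker than the target because $\beta_p>\alpha_p$ when $p>1$.
\item Uniformity of the basis and Weyl--Heisenberg averaging are a red herring: the lemma holds for any orthonormal basis.
\end{itemize}

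The missing idea is a $t$-dependent splitting, which is what the paper does.
\begin{enumerate}
\item Write $tI+(\Delta_{p,d}\otimes\id)(R)=B+(1-p)R$, with $B:=\sum_i\ket{u_i}\bra{u_i}\otimes B_i$ and $B_i:=tI_m+pR_{ii}$.
\item Keep the block-diagonal part exactly: $\ln\det B=\sum_i\mathcal L_t(d\beta_pR_{ii})$, since $p=d\beta_p$.
\item Factor out $B^{1/2}$ and apply Sylvester's identity. This leaves $\ln\det\bigl(I+(1-p)\sum_iT_iT_i^\dagger\bigr)$ with $T_i:=R^{1/2}(\ket{u_i}\otimes B_i^{-1/2})$.
\item Apply concavity only to this whitened coherent part, written as $\frac1d\sum_i\bigl(I+d(1-p)T_iT_i^\dagger\bigr)$.
\item Since $p+d(1-p)=d\alpha_p$, one has $I_m+d(1-p)T_i^\dagger T_i=B_i^{-1/2}(tI_m+d\alpha_pR_{ii})B_i^{-1/2}$. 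So each piece contributes $\mathcal L_t(d\alpha_pR_{ii})-\ln\det B_i$.
\item The net weight on the $\beta_p$ terms is then $1-\frac1d$, which gives exactly the right-hand side.
\end{enumerate}
The pieces depend on $t$ through $B_i$, which is why no fixed mixture or Loewner comparison can reproduce the bound. The same argument covers $p>1$ without change, because every piece remains positive definite as long as $\alpha_p>0$.
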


\begin{proof}
Set $B_i:=tI_m+pR_{ii}>0$ and
$B:=\sum_{i=1}^d\ket{u_i}\bra{u_i}\otimes B_i$.
Define the normalized block columns
$T_i:=R^{1/2}(\ket{u_i}\otimes B_i^{-1/2})$. Then
\begin{align}
 \mathcal L_t\bigl((\Delta_{p,d}\otimes\id)(R)\bigr)
 &=\ln\det B+
 \ln\det\!\left(I_{dm}+(1-p)\sum_{i=1}^dT_iT_i^\dagger\right)\\
 &\ge\ln\det B+\frac1d\sum_{i=1}^d
 \ln\det\!\left(I_{dm}+d(1-p)T_iT_i^\dagger\right)\\
 &=\sum_{i=1}^d\ln\det B_i+\frac1d\sum_{i=1}^d
 \ln\frac{\det(tI_m+d\alpha_pR_{ii})}{\det B_i}\\
 &=\frac1d\sum_{i=1}^d\bigl[
 \mathcal L_t(d\alpha_pR_{ii})
 +(d-1)\mathcal L_t(d\beta_pR_{ii})\bigr].
 \label{eq:block-normalized-chain}
\end{align}
The first equality follows from
$tI_{dm}+(\Delta_{p,d}\otimes\id)(R)=B+(1-p)R$,
factorization by $B^{1/2}$, and Sylvester's determinant identity.
The inequality is concavity of $\ln\det$.
Both the required positivity and the determinant ratio follow from
\begin{align}
 I_m+d(1-p)T_i^\dagger T_i
 =B_i^{-1/2}(tI_m+d\alpha_pR_{ii})B_i^{-1/2}>0.
 \label{eq:block-normalized-identity}
\end{align}
Here $T_i^\dagger T_i=B_i^{-1/2}R_{ii}B_i^{-1/2}$ and
$p+d(1-p)=d\alpha_p$; positivity follows from $t>0$ and $\alpha_p>0$.
The matrices $T_i^\dagger T_i$ and $T_iT_i^\dagger$ share their nonzero
eigenvalues, so $I_{dm}+d(1-p)T_iT_i^\dagger>0$ as well,
including when $p>1$.
Sylvester's identity then gives the determinant ratio in the chain,
and $d\beta_p=p$ yields the final equality. This concludes the proof.
\end{proof}

We then combine the result of Lemma~\ref{lem:block} with the decomposition in Eq.~\eqref{eq:king} to establish the ordering of the depolarizing outputs by mathematical induction.

\begin{theorem}[Log-determinant ordering of the depolarizing outputs]\label{thm:spectral}
For every $n\ge1$, every quantum state $\rho^n$ on $(\mathbb C^d)^{\otimes n}$,
$t>0$, and $0<p\le p_{\max}(d)$,
\begin{align}
 \mathcal L_t\bigl({\mathcal D}_{p,d}^{\otimes n}(\rho^n)\bigr)
 \ge\mathcal L_t\bigl({\mathcal D}_{p,d}^{\otimes n}(\ket{0^n}\bra{0^n})\bigr).
 \label{eq:spectral}
\end{align}

\end{theorem}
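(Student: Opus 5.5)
We prove the statement by induction on $n$, in a form strong enough to pass through the induction. The reference output factorizes as ${\mathcal D}_{p,d}^{\otimes n}(\ket{0^n}\bra{0^n})=\sigma^{\otimes n}$ with $\sigma:=\operatorname{diag}(\alpha_p,\beta_p,\ldots,\beta_p)$. Because $t>0$ is arbitrary and a scalar rescaling of a positive operator only shifts and rescales $t$, we strengthen the claim to a statement about unnormalized positive operators. For every $R\ge0$ on $(\mathbb C^d)^{\otimes n}$ and every $t>0$,
\begin{align}
 \mathcal L_t\bigl({\mathcal D}_{p,d}^{\otimes n}(R)\bigr)\ge\mathcal L_t\bigl(\Tr(R)\,\sigma^{\otimes n}\bigr).
\end{align}
The scaling freedom $\mathcal L_t(cX)=D\ln c+\mathcal L_{t/c}(X)$ lets us treat all traces uniformly.

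\textbf{Base case $n=1$.} We apply concavity and unitary invariance of $\mathcal L_t$ to the spectral decomposition $R=\sum_k r_k\ket{\psi_k}\bra{\psi_k}$. This would give only $\mathcal L_t({\mathcal D}(R))\ge\sum_k(r_k/\Tr R)\,\mathcal L_t(\Tr(R)\,{\mathcal D}(\psi_k))$, which is not the right direction for scalars. So the base case instead follows the same route as the inductive step with $m=1$: use Lemma~\ref{lem:block} directly.

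\textbf{Inductive step.} Write $R$ on $\mathbb C^d\otimes(\mathbb C^d)^{\otimes(n-1)}$ and set $R':=(\id\otimes{\mathcal D}_{p,d}^{\otimes(n-1)})(R)$. Then ${\mathcal D}^{\otimes n}(R)=({\mathcal D}\otimes\id)(R')$. By King's decomposition \eqref{eq:king},
\begin{align}
 ({\mathcal D}\otimes\id)(R')=\sum_a c_a(V_a^\dagger\otimes I)\,(\Delta^{(a)}_{p,d}\otimes\id)(R')\,(V_a\otimes I).
\end{align}
Concavity and unitary invariance of $\mathcal L_t$ then give $\mathcal L_t\ge\sum_a c_a\,\mathcal L_t((\Delta^{(a)}\otimes\id)(R'))$. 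For each $a$, Lemma~\ref{lem:block} bounds this by the average over $i$ of $\mathcal L_t(d\alpha_pR'_{ii})+(d-1)\mathcal L_t(d\beta_pR'_{ii})$. Since ${\mathcal D}^{\otimes(n-1)}$ acts only on the second factor, $R'_{ii}={\mathcal D}^{\otimes(n-1)}(R_{ii})$, where $R_{ii}=(\bra{u_i}\otimes I)R(\ket{u_i}\otimes I)\ge0$. The induction hypothesis, applied to the positive operators $d\alpha_pR_{ii}$ and $d\beta_pR_{ii}$, yields
\begin{align}
 \mathcal L_t(d\alpha_pR'_{ii})\ge\mathcal L_t\bigl(d\alpha_p\Tr(R_{ii})\sigma^{\otimes(n-1)}\bigr),
\end{align}
and similarly for $\beta_p$. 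The hypothesis requires $p\le p_{\max}$, and positivity of $\beta_p,\alpha_p$ is used here.

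It remains to collect terms. Put $w_i:=\Tr R_{ii}$, so that $\sum_iw_i=\Tr R$, and let $\phi(w):=\mathcal L_t(w\,\sigma^{\otimes(n-1)})$. The bound is now $\frac1d\sum_{a}c_a\sum_i[\phi(d\alpha_pw_i)+(d-1)\phi(d\beta_pw_i)]$. The target is the block-diagonal identity
\begin{align}
 \mathcal L_t(w\,\sigma\otimes\sigma^{\otimes(n-1)})=\phi(\alpha_pw)+(d-1)\phi(\beta_pw).
\end{align}

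\textbf{Main obstacle.} This final collection is where the difficulty lies. The function $\phi$ is concave and increasing, and the inequality needed is
\begin{align}
 \frac1d\sum_i g(dw_i)\ge g\Bigl(\sum_iw_i\Bigr),\qquad g(w):=\phi(\alpha_pw)+(d-1)\phi(\beta_pw),
\end{align}
which runs against concavity in general. We would resolve this by exploiting the uniform-basis structure. Since $\{\ket{u_i}\}$ is uniform, we may first twirl $R$ by the diagonal phase group of the computational basis on the first factor. This twirl commutes with ${\mathcal D}$, and by concavity and unitary invariance it can only lower the left-hand side of the original claim. So we may assume $R$ is block-diagonal in the computational basis on the first tensor factor. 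For such $R$, each $R_{ii}=\frac1d\sum_j R_{jj}^{\mathrm{comp}}$ is independent of $i$, so $w_i=\Tr R/d$ for all $i$ and the required inequality holds with equality.

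The twirl must be applied before using King's decomposition, so that $R'_{ii}$ is the same for every $i$. This is the step to verify carefully, together with the compatibility of the twirl with every $a$ in \eqref{eq:king}. It holds because every basis there is uniform relative to the computational basis.
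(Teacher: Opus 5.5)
Your architecture (King's decomposition, Lemma~\ref{lem:block}, the induction hypothesis applied blockwise, block-diagonal reassembly) matches the paper's. You also correctly locate the crux. The collection step needs $\frac1d\sum_i g(dw_i)\ge g(\sum_i w_i)$, which for concave $g$ holds only when all $w_i$ coincide. However, your mechanism for forcing $w_i=\Tr R/d$ fails, because the twirl runs in the wrong direction. Write $T$ for the diagonal-phase twirl on the first factor. Covariance makes ${\mathcal D}_{p,d}^{\otimes n}(T(R))$ a uniform mixture of unitary conjugates of ${\mathcal D}_{p,d}^{\otimes n}(R)$. Concavity and unitary invariance of $\mathcal L_t$ then give $\mathcal L_t\bigl({\mathcal D}_{p,d}^{\otimes n}(T(R))\bigr)\ge\mathcal L_t\bigl({\mathcal D}_{p,d}^{\otimes n}(R)\bigr)$. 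So twirling can only \emph{raise} the left-hand side, not lower it. A lower bound proved for twirled inputs therefore says nothing about the original $R$, and the reduction is not without loss of generality. Already for $n=1$, twirling the uniform-superposition pure input yields $\pi_d$, whose output has strictly larger $\mathcal L_t$ than $\sigma$ when $p\ne1$. Block-diagonality on the first factor cannot in general be reached by a single unitary, so as written your argument covers only a restricted class of inputs.

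The fix, which is what the paper does, is to rotate rather than twirl. Choose one unitary $U$ that diagonalizes the first-factor marginal $\rho_A$ of $R$ in the computational basis. Covariance gives ${\mathcal D}_{p,d}^{\otimes n}\bigl((U\otimes I)R(U\otimes I)^\dagger\bigr)=(U\otimes I)\,{\mathcal D}_{p,d}^{\otimes n}(R)\,(U\otimes I)^\dagger$, so $\mathcal L_t$ is unchanged exactly. Uniformity then yields $w_i=\langle u_i^{(a)}|\rho_A|u_i^{(a)}\rangle=\Tr R/d$ for every $i$ and every $a$. You do not need the blocks $R_{ii}$ themselves to coincide. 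The induction hypothesis is applied to each block separately, and its right-hand side depends only on $\Tr R_{ii}$, so equal traces turn your collection step into an equality. Separately, your base-case objection is mistaken. In the bound you wrote, each $\mathcal L_t(\Tr(R)\,{\mathcal D}_{p,d}(\psi_k))$ equals $\mathcal L_t(\Tr(R)\,\sigma)$ by covariance and unitary invariance, so concavity gives the base case directly, as in the paper. Routing it through Lemma~\ref{lem:block} with $m=1$ would again require the equal-trace reduction.
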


\begin{proof}
Let $\tau:={\mathcal D}_{p,d}(\ket{0}\bra{0})$.
We prove the claim by induction on $n$.
For $n=1$, concavity reduces the minimum to pure inputs, whose outputs
are unitarily equivalent to $\tau$ by covariance.

Assume the claim for $n-1$, with $n\ge2$, and put $m:=d^{n-1}$.
As in King's proof~\cite{Kin03}, covariance implies that applying a unitary to the first input conjugates the
channel output and leaves $\mathcal L_t$ unchanged.
We may therefore assume that the first-input marginal $\rho_A$ is
diagonal in the computational basis.
For every $a=1,\ldots,2d^2(d+1)$ and $i=1,\ldots,d$, uniformity gives
$\langle u_i^{(a)}|\rho_A|u_i^{(a)}\rangle
=\Tr\rho_A/d=1/d$.
Thus
 $\rho_{i,a}^{n-1}
 :=d(\bra{u_i^{(a)}}\otimes I_m)\rho^n
       (\ket{u_i^{(a)}}\otimes I_m)$
is a positive operator of trace one on the remaining $n-1$ inputs.
Here $a$ indexes the bases in Eq.~\eqref{eq:king}, and $i$ indexes
vectors within each basis.
Set $R:=(\id\otimes{\mathcal D}_{p,d}^{\otimes(n-1)})(\rho^n)$.
Its diagonal blocks in the $a$th basis satisfy
\begin{align}
 R_{ii}^{(a)}
 &:=(\bra{u_i^{(a)}}\otimes I_m)R(\ket{u_i^{(a)}}\otimes I_m)
 =\frac1d{\mathcal D}_{p,d}^{\otimes(n-1)}(\rho_{i,a}^{n-1}).
 \label{eq:conditional-output-blocks}
\end{align}
We obtain
\begin{align}
 \mathcal L_t\bigl({\mathcal D}_{p,d}^{\otimes n}(\rho^n)\bigr)
 &=\mathcal L_t\bigl((\mathcal D_{p,d}\otimes\id)(R)\bigr)\\
 &\overset{\text{(a)}}{\ge}
 \sum_{a=1}^{2d^2(d+1)}c_a
 \mathcal L_t\bigl((\Delta_{p,d}^{(a)}\otimes\id)(R)\bigr)\\
 &\overset{\text{(b)}}{\ge}
 \frac1d\sum_{a=1}^{2d^2(d+1)}\sum_{i=1}^d c_a\bigl[
 \mathcal L_t(d\alpha_pR_{ii}^{(a)})
 +(d-1)\mathcal L_t(d\beta_pR_{ii}^{(a)})\bigr]
 \label{eq:induct}\\
 &\overset{\text{(c)}}{\ge}
 \mathcal L_t\bigl(\alpha_p\tau^{\otimes(n-1)}\bigr)
 +(d-1)\mathcal L_t\bigl(\beta_p\tau^{\otimes(n-1)}\bigr)\\
 &\overset{\text{(d)}}{=}\mathcal L_t\bigl(\tau^{\otimes n}\bigr).
 \label{eq:product-induction-reassembly}
\end{align}

Step~(a) applies Eq.~\eqref{eq:king} to the first tensor factor,
followed by concavity and unitary invariance of $\mathcal L_t$.
Step~(b) applies Lemma~\ref{lem:block} in each of the indexed uniform bases.
For step~(c), let $c\in\{\alpha_p,\beta_p\}$. Then
\begin{align}
 \mathcal L_t(dcR_{ii}^{(a)})
 &=m\ln c+
 \mathcal L_{t/c}\bigl({\mathcal D}_{p,d}^{\otimes(n-1)}(\rho_{i,a}^{n-1})\bigr) \ge m\ln c+\mathcal L_{t/c}\bigl(\tau^{\otimes(n-1)}\bigr)
 =\mathcal L_t\bigl(c\tau^{\otimes(n-1)}\bigr).
\end{align}
The first equality uses Eq.~\eqref{eq:conditional-output-blocks} and
$\mathcal L_t(cX)=m\ln c+\mathcal L_{t/c}(X)$ for $X\ge0$ on an $m$-dimensional space.
The inequality is the induction hypothesis at the positive shift $t/c$;
both $\alpha_p$ and $\beta_p$ are positive throughout the stated range.
The final equality uses the same scaling identity.
Averaging this bound gives step~(c), since
$\frac1d\sum_{a=1}^{2d^2(d+1)}\sum_{i=1}^d c_a
=\sum_{a=1}^{2d^2(d+1)}c_a=1$.
Step (d) uses the block-diagonal form of
$\tau\otimes\tau^{\otimes(n-1)}$: one block is
$\alpha_p\tau^{\otimes(n-1)}$, and the other $d-1$ blocks are
$\beta_p\tau^{\otimes(n-1)}$.
This completes the induction.
\end{proof}

\medskip
\begin{remark}
King proved multiplicativity of the maximal output $q$-norm for a
depolarizing channel tensored with an arbitrary auxiliary channel
\cite{Kin03}. In particular,
\begin{align}
 \|{\mathcal D}_{p,d}^{\otimes n}(\rho^n)\|_q
 \le\|{\mathcal D}_{p,d}^{\otimes n}(\ket{0^n}\bra{0^n})\|_q,
 \qquad 1\le q\le\infty.
 \label{eq:king-norm-comparison}
\end{align}
The log-determinant comparison in Theorem~\ref{thm:spectral} also
implies Eq.~\eqref{eq:king-norm-comparison} for $1<q\le2$, and hence
\begin{align}
 S\bigl({\mathcal D}_{p,d}^{\otimes n}(\rho^n)\bigr)
 &\ge S({\mathcal D}_{p,d}^{\otimes n}(\ket{0^n}\bra{0^n})).
 \label{eq:shared-entropy-comparison}
\end{align}
The proof is given in Appendix~\ref{app:entropy-comparison}.
Thus the two approaches share the entropy comparison underlying the
first-order capacity formula. Only the limit $q\to1^+$ is needed
for this consequence. The log-determinant ordering does not,
however, imply the full family of higher-norm inequalities for arbitrary
states, as the $q=3$ example in Appendix~\ref{app:entropy-comparison} shows.
\end{remark}

\subsection{Approximate majorization from log-determinant ordering}\label{sec:approximation}

We next bound the gap between Lorenz curves under log-determinant ordering.
For $x\ge0$ and $t>0$, let $\phi_t(x):=t\ln(1+x/t)$ be the scalar
perspective of $\ln(1+x)$ \cite{BV04}. For fixed $t>0$, $\phi_t$ is increasing and concave, with
$0\le\phi_t(x)\le x$.
For a positive semidefinite operator $X$ on a $D$-dimensional space,
define the rescaled log-determinant functional
\begin{align}
 \mathcal P_t(X)&:=\Tr\phi_t(X)
 =t\ln\det(I_D+X/t)
 =t\bigl[\mathcal L_t(X)-D\ln t\bigr].
 \label{eq:perspective-definition}
\end{align}
This shows that $\mathcal P_t$ and
$\mathcal L_t$ give equivalent orderings in a fixed dimension.

To compare Lorenz ordinates, we decompose $\mathcal P_t$ into spectral
tail mass and two nonnegative correction terms.
For a state $\omega$ with nonincreasing eigenvalues $z_1,\ldots,z_D$
and $k=0,\ldots,D$,
\begin{align}
 \mathcal P_t(\omega)
 =1-\mathcal K_k(\omega)
 +\underbrace{\sum_{j=1}^k\phi_t(z_j)}_{\text{head contribution}}
 -\underbrace{\sum_{j=k+1}^D\bigl[z_j-\phi_t(z_j)\bigr]}_{\text{tail loss}}.
 \label{eq:perspective-tail-decomposition}
\end{align}
Here $1-\mathcal K_k(\omega)$ is the complement of the Lorenz ordinate at $k$.
Under $\mathcal P_t(\rho)\ge\mathcal P_t(\sigma)$, nonnegativity of
the correction terms bounds $\mathcal K_k(\rho)-\mathcal K_k(\sigma)$
by the head contribution of $\rho$ and the tail loss of $\sigma$.

We control these terms using the concentration of the reference
state's spectral information. For a real random variable $Z$ and
$h>0$, the L\'evy concentration function is defined by \cite{HT73}
\begin{align}
 Q(Z,h):=\sup_{a\in\mathbb R}\Pr\{a<Z\le a+h\}.
 \label{eq:spectral-concentration}
\end{align}
For a state $\sigma$ with eigenvalues $y_1,\ldots,y_D$, define
$S_\sigma:=-\log y_J$, where $\Pr\{J=j\}=y_j$.
Zero eigenvalues are never sampled.
Thus $Q(S_\sigma,1)$ is the largest probability in any interval
of width one bit.

\begin{proposition}[Approximate majorization from log-determinant ordering]\label{prop:logdet-majorization}
Let $\rho$ and $\sigma$ be states on the same $D$-dimensional space,
and suppose that $\mathcal L_t(\rho)\ge\mathcal L_t(\sigma)$ for every $t>0$.
Then
\begin{align}
 \delta_{\operatorname{maj}}(\rho,\sigma)
 \le q\left[3+2\ln\ln\frac eq\right]\qquad \text{with}\quad q:=Q(S_\sigma,1).
 \label{eq:optimized-majorization-bound}
\end{align}
\end{proposition}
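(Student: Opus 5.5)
The plan is to fix an integer $k$ with $1\le k\le D$ (the case $k=0$ is trivial) and compare the two instances of the decomposition~\eqref{eq:perspective-tail-decomposition} at a shift $t$ chosen depending on $k$. Since $\mathcal P_t$ and $\mathcal L_t$ induce the same ordering in fixed dimension (Eq.~\eqref{eq:perspective-definition}), the hypothesis gives $\mathcal P_t(\rho)\ge\mathcal P_t(\sigma)$. Write $x_j$ and $y_j$ for the nonincreasing eigenvalues of $\rho$ and $\sigma$. Subtracting the two decompositions and dropping the nonnegative terms (the head of $\sigma$ and the tail loss of $\rho$) yields
\begin{align}
 \mathcal K_k(\rho)-\mathcal K_k(\sigma)\le \sum_{j=1}^k\phi_t(x_j)+\sum_{j=k+1}^D\bigl[y_j-\phi_t(y_j)\bigr].
\end{align}
The first term involves $\rho$ only through concavity. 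By Jensen and monotonicity of $\phi_t$, it is at most $k\,\phi_t(1/k)=kt\ln\bigl(1+1/(kt)\bigr)$. The function $u\mapsto u\ln(1+1/u)$ is increasing, so it then suffices to bound the product $kt$ from above.

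Everything else is controlled by $\sigma$ through dyadic slicing of its spectrum, which is exactly what $q=Q(S_\sigma,1)$ measures. Each event $\{y_J\in(2^{-\ell-1}a,2^{-\ell}a]\}$ has probability at most $q$.

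\textbf{Bounding $ky_k$.} Cover $[y_k,1]$ by the slices $(2^{m}y_k,2^{m+1}y_k]$ for $m\ge0$, together with $\{y_k\}$ (which can be absorbed into the $m=-1$ slice). Slice $m$ contains at most $q/(2^m y_k)$ indices, so $k\le \sum_m q2^{-m}/y_k$, i.e.\ $ky_k\le 2q$ (up to a small constant to be checked).

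\textbf{Bounding the tail loss of $\sigma$.} All tail eigenvalues are at most $y_k$. Use the elementary bound $z-\phi_t(z)\le z\min\{1,z/(2t)\}$. Slices below $2t$ then contribute a geometric series, at most $\sum_\ell q\,2^{-\ell}\lesssim q$. Each slice between $2t$ and $y_k$ contributes at most $q$, and there are about $\log(y_k/t)$ of them. Hence the tail loss is at most roughly $q\bigl(c+\log(y_k/t)\bigr)$.

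The main step is choosing $t$ to balance the two terms without losing a $\ln(1/q)$ factor. Tying $t$ to $y_k$ by a constant factor would leave a head term of order $q\ln(1/q)$. Instead I take
\begin{align}
 t:=\frac{y_k}{\ln(e/q)}.
\end{align}
Then $kt\le 2q/\ln(e/q)$, and the head term is at most $\frac{2q}{\ln(e/q)}\ln\bigl(1+\frac{\ln(e/q)}{2q}\bigr)=O(q)$, because $\ln(\ln(e/q)/q)\approx\ln(1/q)+\ln\ln(e/q)$. The tail term becomes $q\bigl(O(1)+\log\ln(e/q)\bigr)$, which is the $\ln\ln$ growth appearing in~\eqref{eq:optimized-majorization-bound}. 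If $y_k=0$, the tail is empty and letting $t\to0$ kills the head term.

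The obstacle I expect is bookkeeping rather than ideas. To land exactly on the constants $3$ and $2\ln\ln(e/q)$, I would optimize the factor in front of $y_k/\ln(e/q)$. I would also count slices in natural rather than binary units, since $S_\sigma$ is in bits while $\phi_t$ uses $\ln$, and treat the boundary slice containing both $y_k$ and $2t$ carefully. A cleaner route to the constants may be to write the tail loss directly as $\E\bigl[(1-\phi_t(y_J)/y_J)\mathbf 1\{J>k\}\bigr]$ and bound it by a layer-cake integral against the concentration function. Finally, taking the maximum over $k$ gives the bound on $\delta_{\operatorname{maj}}(\rho,\sigma)$.
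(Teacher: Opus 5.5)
Your proposal is correct and is essentially the paper's proof. It uses the same relaxation of the decomposition to the head of $\rho$ plus the tail loss of $\sigma$, the same Jensen bound on the head, the same shift $t=y_k/L$ with $L:=\ln(e/q)$, and the same dyadic estimate $ky_k\le2q$; that estimate comes out exactly, with no extra constant, if you slice the head as $[2^my_k,2^{m+1}y_k)$, $m\ge0$, which matches the half-open intervals in $Q$.

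For the tail, the paper uses the layer-cake bound $\int_0^{y_k}uN_\sigma(u)/(t+u)\,du\le2q\ln(1+L)$, where $N_\sigma(u)$ is the number of eigenvalues of $\sigma$ that are at least $u$; this is the cleaner route you mention. Your binary slice count also works: it gives at most $q\bigl(\lfloor\log_2L\rfloor+2^{-\lfloor\log_2L\rfloor}L\bigr)\le2q\ln(1+L)$, so no switch to natural units is needed, and the constants $3$ and $2\ln\ln(e/q)$ follow as in the paper.
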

\begin{proof}
Write $x_1\ge\cdots\ge x_D\ge0$ and $y_1\ge\cdots\ge y_D\ge0$
for the eigenvalues of $\rho$ and $\sigma$, respectively.
Define the eigenvalue count $N_\sigma(u):=\#\{j:y_j\ge u\}$ for $u>0$.
Set $L:=\ln(e/q)\ge1$. For $1\le k<D$ with $y_k>0$, put $t:=y_k/L$. By Eq.~\eqref{eq:perspective-tail-decomposition}, we separate the Lorenz-curve gap into two contributions:
\begin{align}
 \mathcal K_k(\rho)-\mathcal K_k(\sigma)
 &=\mathcal P_t(\sigma)-\mathcal P_t(\rho)
       +\sum_{j=1}^k\bigl[\phi_t(x_j)-\phi_t(y_j)\bigr]
       +\sum_{j=k+1}^D\bigl[y_j-\phi_t(y_j)-x_j+\phi_t(x_j)\bigr]\\
 &\leq \sum_{j=1}^k\phi_t(x_j)
       +\sum_{j=k+1}^D\bigl[y_j-\phi_t(y_j)\bigr].
 \label{eq:defect-split}
\end{align}
The inequality follows by relaxing the three nonpositive terms
$\mathcal P_t(\sigma)-\mathcal P_t(\rho) \leq 0$,
$-\sum_{j=1}^k\phi_t(y_j) \leq 0$, and
$-\sum_{j=k+1}^D[x_j-\phi_t(x_j)] \leq 0$.
Only the head contribution of $\rho$ and the tail loss of $\sigma$ remain.

We next estimate the two terms in Eq.~\eqref{eq:defect-split}.
Both terms are controlled by the concentration estimate
$uN_\sigma(u)\le2q$. To establish it, set $a:=-\log u$ and observe that
\begin{align}
 uN_\sigma(u)
 &=\sum_{j:y_j\ge u}u
 =\sum_{j:y_j>0}y_j\frac{u}{y_j}\mathbf1_{\{y_j\ge u\}}\\
 &=\mathbb E\!\left[\frac{u}{y_J}\mathbf1_{\{y_J\ge u\}}\right]
 =\mathbb E\!\left[
 2^{S_\sigma-a}\mathbf1_{\{S_\sigma\le a\}}\right]\\
 &=\sum_{r=0}^{\infty}\mathbb E\!\left[
 2^{S_\sigma-a}\mathbf1_{\{a-r-1<S_\sigma\le a-r\}}\right]\\
 &\le\sum_{r=0}^{\infty}\mathbb E\!\left[
 2^{-r}\mathbf1_{\{a-r-1<S_\sigma\le a-r\}}\right]\\
 &=\sum_{r=0}^{\infty}2^{-r}
 \Pr\{a-r-1<S_\sigma\le a-r\}\\
 &\le q\sum_{r=0}^{\infty}2^{-r}=2q.
 \label{eq:threshold-cardinality}
\end{align}
The first line uses the definition of $N_\sigma(u)$ and writes each
contribution $u$ as $y_j(u/y_j)$, with the indicator selecting $y_j\ge u$.
The sum is restricted to $y_j>0$ so that every ratio is well defined.
The second line uses $\Pr\{J=j\}=y_j$, followed by
$2^{S_\sigma-a}=u/y_J$ and the equivalence
$S_\sigma\le a\Longleftrightarrow y_J\ge u$.
The third line partitions $(-\infty,a]$ into the disjoint intervals
$(a-r-1,a-r]$, $r=0,1,\ldots$.
On each corresponding event, $S_\sigma-a\le-r$, so
$2^{S_\sigma-a}\le2^{-r}$; this gives the first inequality.
The next equality uses the fact that the expectation of an indicator
is the probability of its event.
The last inequality uses $q=Q(S_\sigma,1)$ on each interval,
and the final equality sums the geometric series.
Consequently, $ky_k\le y_kN_\sigma(y_k)\le2q$ and $kt\le2q/L$.

The head contribution satisfies
\begin{align}
 \sum_{j=1}^k\phi_t(x_j)
 &\le k\phi_t(1/k)=kt\ln\left(1+\frac1{kt}\right)\\
 &\le\frac{2q}{L}\ln\left(1+\frac{L}{2q}\right)
 =\frac{2q}{L}\left[L-1+\ln\left(q+\frac L2\right)\right]\\
 &\le3q-\frac{2q}{L}.
 \label{eq:head-contribution-bound}
\end{align}
The first inequality uses concavity and monotonicity of $\phi_t$,
together with $\sum_{j=1}^k x_j\le1$.
The second uses $kt\le2q/L$ and monotonicity of $v\ln(1+1/v)$ for $v>0$;
the following equality uses $\ln(1/q)=L-1$.
The last inequality follows from $q\le1$ and $\ln(1+L/2)\le L/2$.

For the tail loss, we have
\begin{align}
 \sum_{j=k+1}^D\bigl[y_j-\phi_t(y_j)\bigr]
 &=\sum_{j=k+1}^D\int_0^{y_j}\frac{u}{t+u}\,du
 \le\int_0^{y_k}\frac{uN_\sigma(u)}{t+u}\,du\\
 &\le2q\ln\left(1+\frac{y_k}{t}\right)
 =2q\ln(1+L)\\
 &\le2q\ln L+\frac{2q}{L}.
 \label{eq:tail-loss-bound}
\end{align}
The first equality integrates $1-\phi_t'(u)=u/(t+u)$.
The first inequality uses $y_j\le y_k$ for $j>k$ and bounds the
number of tail eigenvalues at least $u$ by $N_\sigma(u)$.
The second applies Eq.~\eqref{eq:threshold-cardinality}, followed by
$y_k/t=L$. The last uses
$\ln(1+L)=\ln L+\ln(1+1/L)\le\ln L+1/L$.

Substituting these two bounds into Eq.~\eqref{eq:defect-split}
cancels the terms $2q/L$ and gives
$\mathcal K_k(\rho)-\mathcal K_k(\sigma)
\le3q+2q\ln L=q[3+2\ln\ln(e/q)]$, uniformly in $k$.
If $y_k=0$, then $\mathcal K_k(\sigma)=1$, so the defect is nonpositive.
The cases $k=0,D$ are immediate. Maximizing over $k$ proves
Eq.~\eqref{eq:optimized-majorization-bound}.
\end{proof}

In particular, let $\rho_n$ and $\sigma_n$ be states on the same
$D_n$-dimensional space for each $n$, satisfying
$\mathcal L_t(\rho_n)\ge\mathcal L_t(\sigma_n)$ for every $t>0$.
If $q_n:=Q(S_{\sigma_n},1)\to0$, then
\begin{align}
 \delta_{\operatorname{maj}}(\rho_n,\sigma_n)
 =O\!\left(q_n\left[1+\ln\ln\frac{e}{q_n}\right]\right)\to0,
 \label{eq:general-asymptotic-majorization}
\end{align}
with a universal implied constant and no tensor-product assumption.
This anti-concentration condition is sufficient, not necessary.
Log-determinant ordering alone does not ensure a vanishing defect:
taking the constant sequences $\rho_n=X$ and $\sigma_n=Y$ from
Eq.~\eqref{eq:logdet-higher-norm-example} gives
$\delta_{\operatorname{maj}}(\rho_n,\sigma_n)=1/50$ for every $n$.

For an independent and identically distributed (i.i.d.) reference, the
spectral information is a sum of
independent random variables. A positive single-copy variance gives
the following uniform bound.

\begin{mainresult}
\begin{theorem}[Approximate majorization for i.i.d. reference states]\label{thm:approximation}
Let $\sigma$ be a fixed state on $\mathbb C^d$ with
$V_\sigma:=\operatorname{Var}(S_\sigma)>0$.
There exists $\kappa_\sigma>0$ such that, for all sufficiently large $n$
and every state $\rho^n$ on $(\mathbb C^d)^{\otimes n}$ satisfying
$\mathcal L_t(\rho^n)\ge\mathcal L_t(\sigma^{\otimes n})$ for every $t>0$,
\begin{align}
 \delta_{\operatorname{maj}}(\rho^n,\sigma^{\otimes n})
 \le\frac{\kappa_\sigma}{\sqrt n}(\log \log n+6).
 \label{eq:iid-approximation-bound}
\end{align}
The constant $\kappa_\sigma$ and the starting blocklength depend only
on $\sigma$.
\end{theorem}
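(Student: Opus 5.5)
The argument reduces the theorem to Proposition~\ref{prop:logdet-majorization} together with an anti-concentration estimate for a sum of i.i.d.\ variables. Take $D=d^n$ and reference $\sigma^{\otimes n}$. The eigenvalues of $\sigma^{\otimes n}$ are products $y_{j_1}\cdots y_{j_n}$, and sampling $J$ with probability equal to the eigenvalue corresponds to drawing $j_1,\ldots,j_n$ i.i.d.\ from the spectrum of $\sigma$. Hence
\begin{align}
 S_{\sigma^{\otimes n}}\overset{d}{=}\sum_{l=1}^n X_l,
\end{align}
where the $X_l$ are i.i.d.\ copies of $S_\sigma$. Each $X_l$ takes finitely many values (at most $d$), so it is bounded and all its moments are finite; by assumption its variance is $V_\sigma>0$. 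The hypothesis $\mathcal L_t(\rho^n)\ge\mathcal L_t(\sigma^{\otimes n})$ for all $t>0$ is exactly the hypothesis of Proposition~\ref{prop:logdet-majorization}, which gives
\begin{align}
 \delta_{\operatorname{maj}}(\rho^n,\sigma^{\otimes n})\le q_n\bigl[3+2\ln\ln(e/q_n)\bigr],\qquad q_n:=Q\Bigl(\sum_{l=1}^n X_l,1\Bigr).
\end{align}

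The main step is the estimate $q_n\le C_\sigma/\sqrt n$ for all $n\ge1$, with $C_\sigma$ depending only on $\sigma$. I would prove it with a Kolmogorov--Rogozin type concentration inequality, $Q(\sum_l X_l,h)\le C h/\sqrt{\sum_l h^2(1-Q(X_l,h))}$. Since the $X_l$ are nondegenerate, one can choose a width $h_0\le1$ with $Q(X_1,h_0)<1$, and then cover an interval of width $1$ by $\lceil 1/h_0\rceil$ intervals of width $h_0$. This yields $q_n\le C_\sigma n^{-1/2}$.

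A second route is the Berry--Esseen theorem. It gives $\Pr\{a<\sum_l X_l\le a+1\}\le 1/\sqrt{2\pi nV_\sigma}+2C\,\E|X-\E X|^3/(V_\sigma^{3/2}\sqrt n)$, which is again $O_\sigma(n^{-1/2})$. This is the simplest option and also covers lattice distributions, since Berry--Esseen needs no non-lattice assumption.

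The hardest part is the bookkeeping that turns the bound into the form $\kappa_\sigma(\log\log n+6)/\sqrt n$. The function $g(q)=q[3+2\ln\ln(e/q)]$ is increasing on $(0,1]$: its derivative is $3+2\ln\ln(e/q)-2/\ln(e/q)\ge1>0$. It is therefore legitimate to substitute the upper bound $q_n\le C_\sigma/\sqrt n$ into $g$, valid once $C_\sigma/\sqrt n\le1$, which holds for all sufficiently large $n$. Then
\begin{align}
 \ln\ln\frac{e\sqrt n}{C_\sigma}\le\ln\Bigl(1+\tfrac12\ln n+|\ln C_\sigma|\Bigr)=\ln\ln n+O_\sigma(1).
\end{align}
Converting from natural logarithms via $\ln\ln n=\ln 2\cdot\log\ln n=\ln 2\cdot(\log\log n+\log\ln 2)$ shows that for large $n$ the bracket is at most a constant multiple of $\log\log n+6$. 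Absorbing all constants into $\kappa_\sigma$ proves \eqref{eq:iid-approximation-bound}. Both $\kappa_\sigma$ and the threshold blocklength depend only on $\sigma$, through $V_\sigma$, the third absolute moment of $S_\sigma$, and $C_\sigma$; they do not depend on $\rho^n$, because Proposition~\ref{prop:logdet-majorization} applies uniformly to every $\rho^n$ satisfying the log-determinant hypothesis.
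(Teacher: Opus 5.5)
Your proposal is correct and follows the paper's proof. You write $S_{\sigma^{\otimes n}}$ as an i.i.d.\ sum, bound $Q(S_n,1)=O_\sigma(n^{-1/2})$ by Berry--Esseen at both endpoints together with $\Phi'\le 1/\sqrt{2\pi}$, apply Proposition~\ref{prop:logdet-majorization} using that $q[3+2\ln\ln(e/q)]$ is increasing on $(0,1]$, and absorb constants into $\kappa_\sigma$; your ``second route'' is exactly the paper's computation. Your Kolmogorov--Rogozin alternative is also valid and needs no third moments, at the cost of an unspecified universal constant in place of the paper's explicit $\kappa_\sigma=2/\sqrt{2\pi V_\sigma}+4B_\sigma$.
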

\end{mainresult}
\begin{proof}
Let $S_n$ be the sum of $n$ independent copies of $S_\sigma$.
It has the same distribution as $S_{\sigma^{\otimes n}}$, with
mean $nS(\sigma)=n\mathbb E(S_\sigma)$ and variance $nV_\sigma$.
Let $F_n$ be the cumulative distribution function of
$(S_n-nS(\sigma))/\sqrt{nV_\sigma}$, and set
$B_\sigma:=3\mathbb E|S_\sigma-S(\sigma)|^3/V_\sigma^{3/2}$.
This constant is finite because $S_\sigma$ has finite support.
We obtain
\begin{align}
 q_n:=Q(S_n,1)
 &=\sup_{b\in\mathbb R}\Pr\{b<S_n\le b+1\}\\
 &=\sup_{b\in\mathbb R}\left[
 F_n\!\left(\frac{b+1-nS(\sigma)}{\sqrt{nV_\sigma}}\right)
 -F_n\!\left(\frac{b-nS(\sigma)}{\sqrt{nV_\sigma}}\right)\right]\\
 &=\sup_{a\in\mathbb R}\left[
 F_n\!\left(a+\frac1{\sqrt{nV_\sigma}}\right)-F_n(a)\right]\\
 &\le\sup_{a\in\mathbb R}\left[
 \Phi\!\left(a+\frac1{\sqrt{nV_\sigma}}\right)-\Phi(a)\right]
 +\frac{2B_\sigma}{\sqrt n}\\
 &\le\frac1{\sqrt n}\left[\frac1{\sqrt{2\pi V_\sigma}}+2B_\sigma\right]
 =\frac{\kappa_\sigma}{2\sqrt n}.
 \label{eq:unit-window}
\end{align}
The first equality is the definition of the L\'evy concentration function.
The second expresses the interval probability using the distribution
function $F_n$ of the standardized sum.
The third changes variables to $a=(b-nS(\sigma))/\sqrt{nV_\sigma}$,
which ranges over all of $\mathbb R$ as $b$ does because $V_\sigma>0$.
Thus standardization sends the unit interval $(b,b+1]$ for $S_n$
to $(a,a+1/\sqrt{nV_\sigma}]$: subtracting the mean shifts the interval,
and dividing by the standard deviation scales its length from $1$
to $1/\sqrt{nV_\sigma}$.
The first inequality applies the Berry--Esseen bound
$\sup_a|F_n(a)-\Phi(a)|\le B_\sigma/\sqrt n$ (e.g.,
\cite{Dur19}) at both endpoints, accounting for the factor two.
The second inequality uses $\Phi'(a)\le1/\sqrt{2\pi}$, so the Gaussian
probability of an interval is at most its length divided by $\sqrt{2\pi}$.
The final equality defines
$\kappa_\sigma:=2/\sqrt{2\pi V_\sigma}+4B_\sigma$,
which depends only on $\sigma$.
Applying Proposition~\ref{prop:logdet-majorization} to $\rho^n$ and
$\sigma^{\otimes n}$ gives, for all sufficiently large $n$,
\begin{align}
 \delta_{\operatorname{maj}}(\rho^n,\sigma^{\otimes n})
 &\le\frac{\kappa_\sigma}{\sqrt n}
 \left[\frac32+\ln\ln\frac{2e\sqrt n}{\kappa_\sigma}\right]
 \le\frac{\kappa_\sigma}{\sqrt n}(\log \log n+6).
 \label{eq:kyfan-final-bound}
\end{align}
The first inequality uses Eq.~\eqref{eq:unit-window} and the monotonicity
of the right-hand side of Eq.~\eqref{eq:optimized-majorization-bound}
for $0<q\le1$.
The second follows from
$\ln(2e\sqrt n/\kappa_\sigma)\le\log n$ for all sufficiently large $n$
and $\ln\log n\le\log \log n$ for $n\ge2$.
All estimates depend only on $\sigma$, so the bound is uniform over
$\rho^n$.
\end{proof}

\begin{corollary}[Approximate majorization for depolarizing outputs]\label{cor:depolarizing-approximation}
Fix $d\ge2$ and $0<p\le p_{\max}(d)$ with $p\ne1$.
There exists $\kappa_{d,p}>0$ such that, for all sufficiently large $n$
and every state $\rho^n$ on $(\mathbb C^d)^{\otimes n}$,
\begin{align}
 \delta_{\operatorname{maj}}\!\left({\mathcal D}_{p,d}^{\otimes n}(\rho^n),{\mathcal D}_{p,d}^{\otimes n}(\ket{0^n}\bra{0^n})\right)
 \le\delta_n:=\frac{\kappa_{d,p}}{\sqrt n}(\log \log n+6).
 \label{eq:approximation-bound}
\end{align}
The constant $\kappa_{d,p}$ and the starting blocklength are independent
of $\rho^n$.
\end{corollary}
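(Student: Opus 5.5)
The corollary follows by specializing Theorem~\ref{thm:approximation} to the reference state $\sigma:=\tau={\mathcal D}_{p,d}(\ket{0}\bra{0})$. Since ${\mathcal D}_{p,d}^{\otimes n}(\ket{0^n}\bra{0^n})=\tau^{\otimes n}$, two hypotheses of that theorem need checking.

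First, every output $\rho'^n:={\mathcal D}_{p,d}^{\otimes n}(\rho^n)$ must satisfy $\mathcal L_t(\rho'^n)\ge\mathcal L_t(\tau^{\otimes n})$ for all $t>0$. This is exactly Theorem~\ref{thm:spectral}, which holds for all $0<p\le p_{\max}(d)$. Since $\rho'^n$ is again a state on $(\mathbb C^d)^{\otimes n}$, the hypothesis of Theorem~\ref{thm:approximation} is met with $\rho'^n$ in place of $\rho^n$.

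Second, we need $V_\tau=\Var(S_\tau)>0$. The spectrum of $\tau$ is $\alpha_p$ with multiplicity one and $\beta_p$ with multiplicity $d-1$. Hence $S_\tau$ takes the value $-\log\alpha_p$ with probability $\alpha_p$ and $-\log\beta_p$ with probability $(d-1)\beta_p=1-\alpha_p$. Its variance is therefore
\begin{align}
 \alpha_p(1-\alpha_p)(\log\alpha_p-\log\beta_p)^2=v_{d,p}.
\end{align}
This is positive exactly when $0<\alpha_p<1$ and $\alpha_p\ne\beta_p$.
\begin{itemize}
 \item $\alpha_p\ne\beta_p$ is equivalent to $1-p\ne0$, i.e.\ $p\ne1$.
 \item $\alpha_p<1$ is equivalent to $p(1-1/d)>0$, i.e.\ $p>0$.
 \item $\alpha_p>0$: for $p\le1$ this is clear, and for $1<p\le p_{\max}(d)$ we have $\alpha_p=1-p(d-1)/d\ge 1-d/(d+1)>0$.
\end{itemize}
In the endpoint case $p=p_{\max}(d)$ we get $\beta_p>0$ and $\alpha_p=1/(d+1)>0$, so both atoms of $S_\tau$ are well defined. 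This matches the paper's remark that $v_{d,p}>0$ in the stated range.

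Applying Theorem~\ref{thm:approximation} with $\sigma=\tau$ then gives the bound with $\kappa_{d,p}:=\kappa_\tau$, which depends only on $(d,p)$. The starting blocklength is likewise independent of $\rho^n$. The only step needing care is the positivity of $\alpha_p$ and $\beta_p$ at the endpoint $p>1$ regime, which is why the argument checks it explicitly.
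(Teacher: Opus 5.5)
Your proposal is correct and follows the paper's proof: you use Theorem~\ref{thm:spectral} to get the log-determinant ordering against $\tau^{\otimes n}$, identify $V_\tau=v_{d,p}>0$, and apply Theorem~\ref{thm:approximation} with $\sigma=\tau$. Your explicit check that $\alpha_p\ge 1/(d+1)>0$ when $1<p\le p_{\max}(d)$ is a useful detail that the paper leaves implicit.
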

\begin{proof}
Let $\tau:={\mathcal D}_{p,d}(\ket{0}\bra{0})$.
The spectral information $S_\tau$ takes the values
$-\log \alpha_p$ and $-\log \beta_p$ with probabilities
$\alpha_p$ and $1-\alpha_p$, respectively.
Hence $V_\tau=v_{d,p}>0$.
For every input state $\rho^n$, Theorem~\ref{thm:spectral} gives
$\mathcal L_t({\mathcal D}_{p,d}^{\otimes n}(\rho^n))
\ge\mathcal L_t(\tau^{\otimes n})$ for all $t>0$.
Applying Theorem~\ref{thm:approximation} with $\sigma=\tau$ gives
Eq.~\eqref{eq:approximation-bound} with $\kappa_{d,p}:=\kappa_\tau$.
The uniformity follows because $\tau$ depends only on $d,p$.
\end{proof}

\begin{remark}
The vanishing defect in Eq.~\eqref{eq:approximation-bound} cannot be deduced
from King's Schatten-norm bounds in Eq.~\eqref{eq:king-norm-comparison} alone.
Appendix~\ref{app:norm-defect} constructs full-rank states satisfying
all these bounds for $1\le q\le\infty$ relative to the same product reference,
while their spectral defects tend to $1/2$. This shows that the log-determinant ordering provides a refined spectral control
used in the finite-blocklength converse.
\end{remark}

\section{High-order capacity by approximate majorization}\label{sec:auxiliary-lemmas}

As a direct application of the approximate majorization of depolarizing outputs, we derive a third-order expansion for the classical capacity of the quantum depolarizing channel. This refined spectral control goes beyond first-order asymptotics, showing that entanglement provides no advantage, not only for the asymptotic capacity itself, but also for the rate of convergence to that capacity.

We first specify the coding setup. An $(n,M)$ code consists of input
states $\rho_1^n,\ldots,\rho_M^n$ on $(\mathbb C^d)^{\otimes n}$ and a
decoding positive operator-valued measure (POVM) $\{\Lambda_x\}_{x=1}^M$
on the output, with $\Lambda_x\ge0$ and
$\sum_{x=1}^M\Lambda_x=I_{d^n}$.
Its average error is
\begin{align}
 P_{\rm err}:=1-\frac1M\sum_{x=1}^M
 \Tr\bigl[\Lambda_x{\mathcal D}_{p,d}^{\otimes n}(\rho_x^n)\bigr].
 \label{eq:error}
\end{align}
The maximal message size $M^*(n,\varepsilon)$ is the largest $M$
with $P_{\rm err}\le\varepsilon$.
Inputs may be entangled across all channel uses and decoding may be
collective.

\subsection{One-shot converse by approximate majorization}

The following lemma combines the hypothesis-testing meta-converse of Wang and Renner~\cite{WR12} with the approximate majorization of the channel outputs to establish a one-shot converse. We provide a self-contained proof for completeness.

For a state $X$, a positive reference $Y$ on the same space, and
$0\le\varepsilon<1$, define
\begin{align}
 \beta_\varepsilon(X\|Y):=
 \min_{\substack{0\le Q\le I\\ \Tr(QX)\ge1-\varepsilon}}\Tr(QY),
 \qquad
 D_H^\varepsilon(X\|Y):=-\log \beta_\varepsilon(X\|Y).
 \label{eq:testing-definition}
\end{align}

\begin{lemma}[One-shot converse by approximate majorization]\label{lem:testing}
Let $\mathcal N$ be a finite-dimensional quantum channel with a
$D$-dimensional output space, and let $\omega$ be a state on that space.
Suppose that $0\le\varepsilon<1$, $0\le\delta<1-\varepsilon$, and
$\max_\rho\delta_{\operatorname{maj}}\bigl(\mathcal N(\rho),\omega\bigr)\le\delta$, where the
maximum is over all input states. Then every $M$-message code for
$\mathcal N$ with average error at most $\varepsilon$ satisfies
\begin{align}
 \log  M\le D_H^{\varepsilon+\delta}(\omega\|\pi_D).
 \label{eq:transfer}
\end{align}
\end{lemma}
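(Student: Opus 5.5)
Fix an $M$-message code $\{\rho_x\},\{\Lambda_x\}$ with average error at most $\varepsilon$. The plan is to turn the code into a single test between the averaged output and the maximally mixed state, and then to transfer that test to $\omega$ through the Ky Fan characterization \eqref{eq:kyfan-variational} together with the defect bound.

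First, reduce to effects of small trace. For each $x$, set $u_x:=\Tr\Lambda_x$. Since $\sum_x\Lambda_x=I_D$, we have $\sum_x u_x=D$. Each effect $\Lambda_x$ is feasible in \eqref{eq:kyfan-variational} with $u=u_x$, so
\begin{align}
 \Tr\bigl[\Lambda_x\mathcal N(\rho_x)\bigr]\le\mathcal K_{u_x}\bigl(\mathcal N(\rho_x)\bigr)\le\mathcal K_{u_x}(\omega)+\delta.
\end{align}
The second inequality is the interpolated defect bound stated after \eqref{eq:majorization-defect}, which holds for every real $u\in[0,D]$.

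Next, average over $x$ and use concavity of $u\mapsto\mathcal K_u(\omega)$:
\begin{align}
 1-\varepsilon\le\frac1M\sum_x\Tr\bigl[\Lambda_x\mathcal N(\rho_x)\bigr]\le\frac1M\sum_x\mathcal K_{u_x}(\omega)+\delta\le\mathcal K_{D/M}(\omega)+\delta.
\end{align}
This gives $\mathcal K_{D/M}(\omega)\ge1-\varepsilon-\delta>0$; here we need $\delta<1-\varepsilon$. Note that $D/M\le D$ because $M\ge1$.

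Then realize $\mathcal K_{D/M}(\omega)$ by an explicit effect. Let $Q$ be the projector onto the top $k=\lfloor D/M\rfloor$ eigenvectors of $\omega$, plus the fraction $D/M-k$ of the projector onto the $(k+1)$-st eigenvector. Then $0\le Q\le I$, $\Tr Q=D/M$, and $\Tr(Q\omega)=\mathcal K_{D/M}(\omega)\ge1-(\varepsilon+\delta)$. So $Q$ is feasible for $\beta_{\varepsilon+\delta}(\omega\|\pi_D)$, and
\begin{align}
 \beta_{\varepsilon+\delta}(\omega\|\pi_D)\le\Tr(Q\pi_D)=\frac1M.
\end{align}
Taking $-\log$ yields \eqref{eq:transfer}.

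The step I expect to need care is the averaging step. It relies on Jensen's inequality for the concave interpolated Ky Fan function at the non-integer argument $D/M$, and on the fact that the defect bound extends from integer to real $u$. Both are already recorded in the preliminaries, since $\mathcal K_u$ is concave and piecewise linear. Beyond that, one should check the edge case $M=1$, which is trivial, and the requirement $\varepsilon+\delta<1$, which keeps $\beta_{\varepsilon+\delta}$ well defined.
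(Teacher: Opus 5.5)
Your proposal is correct and matches the paper's proof step for step. The chain is the same: the Ky Fan bound on each decoding effect, the interpolated defect bound, and concavity of $u\mapsto\mathcal K_u(\omega)$, giving $\mathcal K_{D/M}(\omega)\ge1-\varepsilon-\delta$. The only cosmetic difference is at the end: you exhibit the fractional test $Q$ with $\Tr Q=D/M$ explicitly, whereas the paper cites the identity $\beta_\varepsilon(\omega\|\pi_D)=\frac1D\min\{u:\mathcal K_u(\omega)\ge1-\varepsilon\}$, and your construction is exactly the direction of that identity that is needed.
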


\begin{proof}
Consider any such code with input states $\rho_1,\ldots,\rho_M$ and
POVM $\{\Lambda_x\}_{x=1}^M$. We have
\begin{align}
 1-\varepsilon
 & \le\frac1M\sum_{x=1}^M
       \Tr\!\bigl[\Lambda_x\mathcal N(\rho_x)\bigr]\\
    & \le\frac1M\sum_{x=1}^M
       \mathcal K_{\Tr \Lambda_x}(\mathcal N(\rho_x))\\
 & \le\frac1M\sum_{x=1}^M \mathcal K_{\Tr\Lambda_x}(\omega)+\delta
 \le \mathcal K_{D/M}(\omega)+\delta.
 \label{eq:decoder-lorenz-average}
\end{align}
The first inequality is the error constraint, and the second follows
from Eq.~\eqref{eq:kyfan-variational}. The third uses the spectral assumption;
the last uses
concavity of $\mathcal K_u(\cdot)$ in $u$ and $\sum_{x=1}^M\Tr\Lambda_x=D$.
For any state $\omega$ on the output space, Eq.~\eqref{eq:kyfan-variational}
and the definition of $\beta_\varepsilon$ give
\begin{align}
 \beta_\varepsilon(\omega\|\pi_D)
 =\frac1D\min\{u\in[0,D]:\mathcal K_u(\omega)\ge1-\varepsilon\}.
 \label{eq:testing-kyfan}
\end{align}
Applying this relation to Eq.~\eqref{eq:decoder-lorenz-average} yields
$\beta_{\varepsilon+\delta}(\omega\|\pi_D)\le1/M$, proving
Eq.~\eqref{eq:transfer}.
\end{proof}

\subsection{Proof of the third-order formula in Theorem~\ref{thm:main}}\label{sec:mainproof}

\paragraph{Achievability.}
Encoding and measurement in a fixed orthonormal basis induce
the classical $d$-ary symmetric channel
\begin{align}
 W_{p,d}(y|x):=
 \begin{cases}
 \alpha_p,&y=x,\\
 \beta_p,&y\ne x,
 \end{cases}
 \qquad x,y\in\{0,\ldots,d-1\}.
 \label{eq:classical}
\end{align}
Every classical code for $W_{p,d}$ gives a product-state quantum code
with the same average error.
By symmetry, uniform input achieves capacity $c_{d,p}$ and induces uniform output.
Moreover, $P_{X|Y=y}(x)=W_{p,d}(y|x)$, so the conditional
information-density distribution is independent of $y$.
Thus the dispersion and reverse dispersion both equal $v_{d,p}>0$.
The symmetric-channel theorem
\cite{Pol10} therefore gives
\begin{align}
 \log M^*(n,\varepsilon)
 \ge nc_{d,p}+\sqrt{nv_{d,p}}\Phi^{-1}(\varepsilon)
       +\frac12\log n+O_{d,p,\varepsilon}(1).
 \label{eq:achieve}
\end{align}
\paragraph{Converse.}
Let $D:=d^n$ and $\tau:={\mathcal D}_{p,d}(\ket{0}\bra{0})$.
By Eq.~\eqref{eq:approximation-bound}, we have $\delta_n\to0$. Since
$\varepsilon\in(0,1)$ is fixed, $\varepsilon+\delta_n<1$ for all
sufficiently large $n$. We have
\begin{align}
 \log M^*(n,\varepsilon)
 &\le D_H^{\varepsilon+\delta_n}
      \left(\tau^{\otimes n}\middle\|\pi_D\right)
 \label{eq:converse-testing}\\
 &\le nc_{d,p}+\sqrt{nv_{d,p}}\,\Phi^{-1}(\varepsilon+\delta_n)
      +\frac12\log n+O_{d,p,\varepsilon}(1)
 \label{eq:converse-testing-expansion}\\
 &=nc_{d,p}+\sqrt{nv_{d,p}}\,\Phi^{-1}(\varepsilon)
      +\frac12\log n+O_{d,p,\varepsilon}(\log \log n).
 \label{eq:converse-final}
\end{align}
The first inequality follows from Corollary~\ref{cor:depolarizing-approximation}
and Lemma~\ref{lem:testing}.
For the second inequality, dephasing a test in the eigenbasis of
$\tau^{\otimes n}$ reduces the problem to classical i.i.d.
hypothesis testing. Under the null distribution, the one-letter
log-likelihood ratio takes the values $\log (d\alpha_p)$ and
$\log (d\beta_p)$ with probabilities $\alpha_p$ and $1-\alpha_p$.
Its mean and variance are $c_{d,p}$ and $v_{d,p}>0$, and its third
absolute centered moment is finite. The third-order expansion in
\cite{Tan14} therefore applies.
Its proof gives a uniform $O(1)$ remainder for error tolerances in
any compact subinterval of $(0,1)$: the Berry--Esseen corrections
are of order $n^{-1/2}$, and $(\Phi^{-1})'$ is bounded there.
This permits the choice $\varepsilon+\delta_n\to\varepsilon$.
With base-two logarithms, the divergence in \cite{Tan14} equals ours
plus $\log (1-\varepsilon-\delta_n)$; this uniformly bounded term
is absorbed into the remainder.

Finally, Eq.~\eqref{eq:converse-final} follows from
$\Phi^{-1}(\varepsilon+\delta_n)=\Phi^{-1}(\varepsilon)
+O_\varepsilon(\delta_n)$ and
$\sqrt n\,\delta_n=O_{d,p}(\log \log n)$.
All constants are independent of the code. Together with
Eq.~\eqref{eq:achieve}, this proves Theorem~\ref{thm:main}.
\hfill$\square$

\section{Discussion}\label{sec:discussion}

We establish a uniform approximate-majorization order for quantum
depolarizing outputs through exact log-determinant ordering and
spectral anti-concentration. Applied to the coding converse, this
order yields the third-order capacity expansion through classical
i.i.d. hypothesis testing and orthogonal-basis achievability.
These results provide spectral control beyond Schatten-norm bounds
and extend the leading-order optimality of orthogonal-basis coding
to the Gaussian and logarithmic corrections. Any advantage of
entangled codewords and collective decoding is therefore limited to
$O(\log \log n)$ bits.

Several interesting questions remain open for future study. A central one is to establish the full majorization relation in Eq.~\eqref{eq:output-majorization-conjecture} in general. Other important directions include improving the current remainder term to $O(1)$ and characterizing the corresponding fourth-order behavior. Beyond these asymptotic refinements, approximate majorization could find applications in other channel models and in quantum-information tasks governed by spectral constraints. Finally, extending the log-determinant and anti-concentration techniques to relative majorization may broaden their applicability to problems involving comparisons between pairs of quantum states.

\bigskip
\paragraph{Acknowledgements.}
K.F. and Z.S. are supported in part by the National Natural Science Foundation of China (Grants No. 92470113 and 12404569), the Shenzhen Science and Technology Program (Grants No. QNXMB20250701091826036 and JCYJ20240813113519025), the Shenzhen Fundamental Research Program (Grant No. JCYJ20241202124023031), the General R\&D Projects of 1+1+1 CUHK-CUHK(SZ)-GDST Joint Collaboration Fund (Grant No. GRDP2025-022), the Guangdong Provincial Quantum Science Strategic Initiative (Grant No. GDZX2503001), and the University Development Fund (Grant No. UDF01003565). 
XW is supported by the National Natural Science Foundation of China (Grant. No.~92576114, 12447107), the Guangdong Provincial Quantum Science Strategic Initiative (Grant No.~GDZX2403008), and the Guangdong Provincial Key Lab of Integrated Communication, Sensing and Computation for Ubiquitous Internet of Things (Grant No.~2023B1212010007).
MT is supported by the National Research Office, though the NRF Investigatorship award (NRF-NRFI10-2024-0006) and by the Ministry of Education, Singapore, through grant T2EP20124-0005.

OpenAI GPT-6 Astra was used to explore proof strategies and support the writing of the manuscript. We
have verified AI-assisted material and take full responsibility for the content of this work.

\bibliographystyle{alpha}
\bibliography{bib}

@article{SW97,
  author  = {Schumacher, Benjamin and Westmoreland, Michael D.},
  title   = {Sending classical information via noisy quantum channels},
  journal = {Physical Review A},
  year    = {1997},
  volume  = {56},
  number  = {1},
  pages   = {131--138}
}

@article{Hol98,
  author        = {Holevo, Alexander S.},
  title         = {The capacity of the quantum channel with general signal states},
  journal       = {IEEE Transactions on Information Theory},
  year          = {1998},
  volume        = {44},
  number        = {1},
  pages         = {269--273},
  eprint        = {quant-ph/9611023},
  archivePrefix = {arXiv}
}

@article{WR12,
  author        = {Wang, Ligong and Renner, Renato},
  title         = {One-shot classical--quantum capacity and hypothesis testing},
  journal       = {Physical Review Letters},
  year          = {2012},
  volume        = {108},
  number        = {20},
  pages         = {200501},
  eprint        = {1007.5456},
  archivePrefix = {arXiv},
  primaryClass  = {quant-ph}
}

@article{Kin03,
  author        = {King, Christopher},
  title         = {The capacity of the quantum depolarizing channel},
  journal       = {IEEE Transactions on Information Theory},
  year          = {2003},
  volume        = {49},
  number        = {1},
  pages         = {221--229},
  eprint        = {quant-ph/0204172},
  archivePrefix = {arXiv}
}

@article{TPK14,
  author        = {Temme, Kristan and Pastawski, Fernando and Kastoryano, Michael J.},
  title         = {Hypercontractivity of quasi-free quantum semigroups},
  journal       = {Journal of Physics A: Mathematical and Theoretical},
  year          = {2014},
  volume        = {47},
  number        = {40},
  pages         = {405303},
  eprint        = {1403.5224},
  archivePrefix = {arXiv},
  primaryClass  = {quant-ph}
}

@book{BV04,
  author    = {Boyd, Stephen and Vandenberghe, Lieven},
  title     = {Convex Optimization},
  publisher = {Cambridge University Press},
  address   = {Cambridge},
  year      = {2004}
}

@book{Dur19,
  author    = {Durrett, Rick},
  title     = {Probability: Theory and Examples},
  edition   = {5th},
  publisher = {Cambridge University Press},
  year      = {2019}
}

@phdthesis{Pol10,
  author = {Polyanskiy, Yury},
  title  = {Channel Coding: Non-Asymptotic Fundamental Limits},
  school = {Princeton University},
  year   = {2010}
}

@article{Tan14,
  author        = {Tan, Vincent Y. F.},
  title         = {Asymptotic Estimates in Information Theory with Non-Vanishing Error Probabilities},
  journal       = {Foundations and Trends in Communications and Information Theory},
  year          = {2014},
  volume        = {11},
  number        = {1--2},
  pages         = {1--184},
  eprint        = {1504.02608},
  archivePrefix = {arXiv},
  primaryClass  = {cs.IT}
}

@book{HT73,
  author    = {Hengartner, W. and Theodorescu, R.},
  title     = {Concentration Functions},
  series    = {Probability and Mathematical Statistics},
  volume    = {20},
  publisher = {Academic Press},
  address   = {New York--London},
  year      = {1973}
}

@misc{DGO26,
  author        = {Dong, Yangjing and Gao, Li and Ou, Fengning and Yao, Penghui and Zhou, Haigang},
  title         = {Dimension-Free Approximate Tensorization of Quantum Hypercontractivity for Qudit Depolarizing Semigroups},
  howpublished  = {arXiv:2606.17729 [quant-ph]},
  year          = {2026},
  eprint        = {2606.17729},
  archivePrefix = {arXiv},
  primaryClass  = {quant-ph}
}

@article{zhu2025geometric,
  title         = {Geometric optimization for quantum communication},
  author        = {Zhu, Chengkai and Mao, Hongyu and Fang, Kun and Wang, Xin},
  journal       = {arXiv preprint arXiv:2509.15106},
  year          = {2025},
  eprint        = {2509.15106},
  archivePrefix = {arXiv},
  primaryClass  = {quant-ph}
}

@article{SSWR17,
  author        = {Sutter, David and Scholz, Volkher B. and Winter, Andreas and Renner, Renato},
  title         = {Approximate degradable quantum channels},
  journal       = {IEEE Transactions on Information Theory},
  year          = {2017},
  volume        = {63},
  number        = {12},
  pages         = {7832--7844},
  eprint        = {1412.0980},
  archivePrefix = {arXiv},
  primaryClass  = {quant-ph}
}

@article{LDS18,
  author        = {Leditzky, Felix and Datta, Nilanjana and Smith, Graeme},
  title         = {Useful states and entanglement distillation},
  journal       = {IEEE Transactions on Information Theory},
  year          = {2018},
  volume        = {64},
  number        = {7},
  pages         = {4689--4708},
  eprint        = {1701.03081},
  archivePrefix = {arXiv},
  primaryClass  = {quant-ph}
}

@article{JD24,
  author        = {Jabbour, Michael G. and Datta, Nilanjana},
  title         = {Tightening continuity bounds for entropies and bounds on quantum capacities},
  journal       = {IEEE Journal on Selected Areas in Information Theory},
  year          = {2024},
  volume        = {5},
  pages         = {645--658},
  eprint        = {2310.17329},
  archivePrefix = {arXiv},
  primaryClass  = {quant-ph}
}

@article{ZZW24,
  author        = {Zhu, Chengkai and Zhu, Chenghong and Wang, Xin},
  title         = {Estimate distillable entanglement and quantum capacity by squeezing useless entanglement},
  journal       = {IEEE Journal on Selected Areas in Communications},
  year          = {2024},
  volume        = {42},
  number        = {7},
  pages         = {1850--1860},
  eprint        = {2303.07228},
  archivePrefix = {arXiv},
  primaryClass  = {quant-ph}
}

@article{BDR20,
  author        = {Beigi, Salman and Datta, Nilanjana and Rouz{\'e}, Cambyse},
  title         = {Quantum reverse hypercontractivity: Its tensorization and application to strong converses},
  journal       = {Communications in Mathematical Physics},
  year          = {2020},
  volume        = {376},
  number        = {2},
  pages         = {753--794},
  eprint        = {1804.10100},
  archivePrefix = {arXiv},
  primaryClass  = {quant-ph}
}

@article{KT13,
  author        = {Kastoryano, Michael J. and Temme, Kristan},
  title         = {Quantum logarithmic {Sobolev} inequalities and rapid mixing},
  journal       = {Journal of Mathematical Physics},
  year          = {2013},
  volume        = {54},
  number        = {5},
  pages         = {052202},
  eprint        = {1207.3261},
  archivePrefix = {arXiv},
  primaryClass  = {quant-ph}
}

@article{Kin14,
  author        = {King, Christopher},
  title         = {Hypercontractivity for semigroups of unital qubit channels},
  journal       = {Communications in Mathematical Physics},
  year          = {2014},
  volume        = {328},
  number        = {1},
  pages         = {285--301},
  eprint        = {1210.8412},
  archivePrefix = {arXiv},
  primaryClass  = {quant-ph}
}

@book{Wat18,
  author    = {Watrous, John},
  title     = {The Theory of Quantum Information},
  publisher = {Cambridge University Press},
  year      = {2018}
}

\begin{appendices}

\section{Comparison with Schatten-norm bounds}\label{app:normcomparison}
We compare the spectral implications of Theorem~\ref{thm:spectral} and the
Schatten-norm inequalities in Eq.~\eqref{eq:king-norm-comparison}.
The counterexamples below concern arbitrary states and are not assumed
to arise as channel outputs.

\subsection{Consequences and limitations of log-determinant ordering}
\label{app:entropy-comparison}
\begin{proposition}\label{prop:logdet-entropy}
Let $X$ and $Y$ be states on the same $D$-dimensional space. If
\begin{align}
 \mathcal L_t(X)\ge\mathcal L_t(Y),\qquad t>0,
 \label{eq:appendix-logdet-order}
\end{align}
then
\begin{align}
 \|X\|_q\le\|Y\|_q\quad(1<q\le2),
 \qquad S(X)\ge S(Y).
 \label{eq:appendix-entropy-consequence}
\end{align}
\end{proposition}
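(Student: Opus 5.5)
The plan is to write $\Tr X^q$ for $1<q<2$ as a positive mixture, over the shift $t$, of the rescaled log-determinant functionals $\mathcal P_t$ from Eq.~\eqref{eq:perspective-definition}. The $q$-norm comparison then follows from the hypothesis by integration. The case $q=2$ and the entropy statement follow by limiting arguments.

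\textbf{Integral representation.} For $u\ge0$ and $0<s<1$ we have $\int_0^\infty t^{s-1}\frac{u}{t+u}\,dt=C_s u^{s}$ with $C_s=\pi/\sin(\pi s)>0$. Take $s=q-1\in(0,1)$. Using $x-\phi_t(x)=\int_0^x\frac{u}{t+u}\,du$ and Tonelli, we get for every $x\ge0$
\begin{align}
 \int_0^\infty t^{q-2}\bigl[x-\phi_t(x)\bigr]\,dt=\frac{C_{q-1}}{q}\,x^q .
\end{align}
Sum this over the eigenvalues of a state $X$ and use $\Tr X=1$. This gives
\begin{align}
 \Tr X^q=\frac{q}{C_{q-1}}\int_0^\infty t^{q-2}\bigl[1-\mathcal P_t(X)\bigr]\,dt .
\end{align}
The integrand is nonnegative because $\phi_t(x)\le x$, so Tonelli applies with no further checks. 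The integral is finite: near $t=0$ the factor $1-\mathcal P_t(X)$ is at most $1$ and $t^{q-2}$ is integrable since $q>1$; near $t=\infty$ we have $x-\phi_t(x)\le x^2/(2t)$, so the integrand is $O(t^{q-3})$, which is integrable since $q<2$.

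\textbf{Comparison.} By Eq.~\eqref{eq:perspective-definition}, $\mathcal P_t(X)-\mathcal P_t(Y)=t[\mathcal L_t(X)-\mathcal L_t(Y)]\ge0$ for all $t>0$. Hence $1-\mathcal P_t(X)\le1-\mathcal P_t(Y)$ pointwise in $t$. Integrating against the positive weight $t^{q-2}$ gives $\Tr X^q\le\Tr Y^q$, that is, $\|X\|_q\le\|Y\|_q$ for $1<q<2$. The endpoint $q=2$ follows by letting $q\to2^-$, since $q\mapsto\Tr X^q$ is continuous in finite dimension.

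\textbf{Entropy.} Set $g(q):=\Tr Y^q-\Tr X^q$. Then $g(q)\ge0$ on $(1,2]$ and $g(1)=0$ because both are states. So the right derivative at $q=1$ is nonnegative:
\begin{align}
 g'(1^+)=\Tr Y\ln Y-\Tr X\ln X\ge0 .
\end{align}
Here $\frac{d}{dq}\Tr X^q=\Tr X^q\ln X$, with the convention $0\ln0=0$; this is smooth since only nonzero eigenvalues contribute. Dividing by $\ln2$ gives $S(X)\ge S(Y)$.

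The only delicate step is the integral representation: choosing the exponent so that the $t$-integral converges at both ends, and justifying the exchange of sum and integral. Nonnegativity of the integrand handles the exchange, and the restriction $1<q<2$ is exactly what makes the integral converge. This also explains why the argument stops at $q=2$.
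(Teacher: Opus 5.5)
Your proposal is correct and follows the paper's route. Both arguments use the representation $\Tr Z^q=c_q\int_0^\infty t^{q-2}\bigl[1-\mathcal P_t(Z)\bigr]\,dt$ for $1<q<2$, compare the integrands pointwise, extend to $q=2$ by continuity, and get the entropy from the right derivative at $q=1^+$. The only difference is that you compute the constant explicitly as $c_q=q/C_{q-1}=q\sin(\pi(q-1))/\pi$ via Tonelli, whereas the paper defines it implicitly by $c_q^{-1}=\int_0^\infty t^{q-2}[1-\phi_t(1)]\,dt$ together with a scaling change of variables.
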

\begin{proof}
For $1<q<2$, define the positive constant $c_q$ by
\begin{align}
 c_q^{-1}:=\int_0^\infty t^{q-2}
       \bigl[1-\phi_t(1)\bigr]dt.
 \label{eq:moment-integral-constant}
\end{align}
The integral is finite because its integrand is $O(t^{q-2})$ at zero
and $O(t^{q-3})$ at infinity. A change of variable gives, for $a\ge0$,
\begin{align}
 a^q=c_q\int_0^\infty t^{q-2}
             \bigl[a-\phi_t(a)\bigr]dt.
 \label{eq:moment-scalar-integral}
\end{align}
Applying this identity to the eigenvalues of a state $Z$ yields
\begin{align}
 \Tr Z^q=c_q\int_0^\infty t^{q-2}
             \bigl[1-\mathcal P_t(Z)\bigr]dt.
 \label{eq:moment-logdet-integral}
\end{align}
Comparing the nonnegative integrands using
Eqs.~\eqref{eq:appendix-logdet-order} and~\eqref{eq:perspective-definition} gives
$\Tr X^q\le\Tr Y^q$ for $1<q<2$.
Continuity as $q\to2^-$ includes $q=2$.
Since both traces equal one at $q=1$,
\begin{align}
 \Tr(X\ln X)
 =\lim_{q\to1^+}\frac{\Tr X^q-1}{q-1}
 \le\lim_{q\to1^+}\frac{\Tr Y^q-1}{q-1}
 =\Tr(Y\ln Y).
 \label{eq:entropy-from-moment-limit}
\end{align}
The limits hold termwise on the finite spectra, with $0\ln0:=0$.
Multiplication by $-1/\ln2$ proves the entropy comparison.
\end{proof}

\paragraph{Higher norms.}
The conclusion for $1<q\le2$ does not extend to all larger exponents.
Consider the three-dimensional diagonal states
\begin{align}
 X:=\operatorname{diag}\left(\frac{42}{100},\frac{29}{100},\frac{29}{100}\right),
 \qquad
 Y:=\operatorname{diag}\left(\frac{40}{100},\frac{35}{100},\frac{25}{100}\right).
 \label{eq:logdet-higher-norm-example}
\end{align}
Direct expansion gives
\begin{align}
 \det(tI_3+X)-\det(tI_3+Y)
 =\frac{t}{5000}+\frac{161}{500000}>0,
 \qquad t>0.
 \label{eq:logdet-higher-norm-determinants}
\end{align}
Thus Eq.~\eqref{eq:appendix-logdet-order} holds strictly for every $t>0$.
Nevertheless,
\begin{align}
 \Tr X^3=\frac{61433}{500000}
 >\frac{49}{400}=\Tr Y^3,
 \label{eq:logdet-higher-norm-cubes}
\end{align}
so $\|X\|_3>\|Y\|_3$.

\Needspace{17\baselineskip}
\subsection{Why Schatten norms alone cannot give a vanishing defect}
\label{app:norm-defect}
We retain ${\mathcal D}_{p,d}^{\otimes n}(\ket{0^n}\bra{0^n})$ as the reference state.
The following construction satisfies every Schatten-norm bound while
its spectral defect remains bounded away from zero.

\begin{proposition}[Schatten-norm bounds with nonvanishing spectral defect]\label{prop:norm-nonvanishing-defect}
Fix $d\ge2$ and $0<p\le p_{\max}(d)$ with $p\ne1$.
For all sufficiently large $n$, there exist
full-rank states $W_n$ on $(\mathbb C^d)^{\otimes n}$ such that
\begin{align}
 \|W_n\|_q&\le\bigl\|{\mathcal D}_{p,d}^{\otimes n}(\ket{0^n}\bra{0^n})\bigr\|_q,
 \qquad 1\le q\le\infty,
 \label{eq:all-norms-nonvanishing-defect}
\end{align}
and
\begin{align}
   \delta_{\operatorname{maj}}\!\left(W_n,{\mathcal D}_{p,d}^{\otimes n}(\ket{0^n}\bra{0^n})\right)
 &\to\frac12.
\end{align}
\end{proposition}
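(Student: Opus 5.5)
The plan is to build $W_n$ as an almost flat state whose bulk sits exactly at the typical entropy level of the reference. Then every Rényi-entropy constraint holds automatically, while the Lorenz ordinate at that level is near $1$ for $W_n$ but only near $1/2$ for the reference.

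\textbf{Construction.} Let $\tau:={\mathcal D}_{p,d}(\ket{0}\bra{0})$ and $\sigma_n:=\tau^{\otimes n}$. Write $S:=S(\tau)$ and $D:=d^n$. Set $k_n:=\lceil 2^{nS}\rceil$. Since $p\ne1$ gives $\alpha_p\ne\beta_p$, we have $S<\log d$, so $k_n<D$ for large $n$. Fix $\eta_n\in(0,1)$ with $\eta_n\to0$, and define $W_n$ to be diagonal with $k_n$ eigenvalues $(1-\eta_n)/k_n$ and $D-k_n$ eigenvalues $\eta_n/(D-k_n)$. This state is full rank.

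\textbf{Norm bounds.} For $q>1$, let $H_q:=\frac{1}{1-q}\log(\alpha_p^q+(d-1)\beta_p^q)$ be the Rényi entropy of $\tau$, so that $\Tr\sigma_n^q=2^{(1-q)nH_q}$. Rényi entropies are nonincreasing in $q$, so $H_q\le S$ and hence $2^{(1-q)nS}\le 2^{(1-q)nH_q}$. On the other side,
\begin{align}
 \Tr W_n^q=(1-\eta_n)^q k_n^{1-q}+\eta_n^q(D-k_n)^{1-q}
 \le\bigl[(1-\eta_n)^q+\eta_n^q\bigr]k_n^{1-q}\le k_n^{1-q}\le 2^{(1-q)nS}.
\end{align}
This chain uses three facts: $D-k_n\ge k_n$ for large $n$, the inequality $(1-\eta)^q+\eta^q\le1$ for $q\ge1$, and $k_n\ge2^{nS}$. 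It gives $\|W_n\|_q\le\|\sigma_n\|_q$ for all $1<q<\infty$. The case $q=1$ is trivial since both traces equal one, and $q=\infty$ follows by letting $q\to\infty$. Rounding $k_n$ \emph{up} is essential here, because $k^{1-q}$ is decreasing in $k$.

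\textbf{Defect.} We have $\mathcal K_{k_n}(W_n)\ge1-\eta_n$. For the reference, $\mathcal K_{k_n}(\sigma_n)$ is at most the $\sigma_n$-mass of eigenvalues $\ge$ the $k_n$-th largest. Up to ties this equals $\Pr\{S_{\sigma_n}\le \log(1/y_{k_n})\}$, where $S_{\sigma_n}$ is an i.i.d. sum with mean $nS$ and variance $nv_{d,p}>0$. I would control it using the counting estimate $uN_{\sigma}(u)\le$ mass-type bounds and the CLT/Berry--Esseen, exactly as in the proof of Theorem~\ref{thm:approximation}. The goal is to show that the top $2^{nS+O(1)}$ eigenvalues carry mass $\tfrac12+o(1)$, i.e.\ that $\log(1/y_{k_n})=nS+o(\sqrt n)$.

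The main obstacle is this last step. To locate $y_{k_n}$, I would compare two quantities for $a=O(1)$:
\begin{align}
 N_{\sigma_n}\bigl(2^{-nS-a\sqrt n}\bigr)\quad\text{and}\quad k_n .
\end{align}
The count $N_{\sigma_n}(2^{-nS-a\sqrt n})$ equals $\sum_j\binom nj(d-1)^{n-j}$ over types with information $\le nS+a\sqrt n$. It is $2^{nS+\Theta(\sqrt n)}$ for $a>0$ and $2^{nS-\Theta(\sqrt n)}$ for $a<0$, by a standard local large-deviation or type-counting estimate. So the threshold index $k_n$ corresponds to an information level $nS+o(\sqrt n)$, and the CLT then gives $\mathcal K_{k_n}(\sigma_n)\to\Phi(0)=\tfrac12$. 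Combining, $\delta_{\operatorname{maj}}(W_n,\sigma_n)\ge(1-\eta_n)-\tfrac12-o(1)$.

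A matching upper bound of $\tfrac12+o(1)$ should follow from checking all other $k$ in the same way. $W_n$ is flat on its top block and has total mass $1$, while $\mathcal K_k(\sigma_n)$ is near $1$ once $k\gg 2^{nS}$ and is $o(1)$ for $k\ll2^{nS}$. This yields the claimed limit $\tfrac12$.
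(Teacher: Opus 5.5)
Your proposal is correct and follows the paper's construction, but it proves the key limit $\mathcal K_{k_n}(\sigma_n)\to\frac12$ by a longer route, and its matching upper bound is not yet a proof.

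\textbf{Construction and norms.} These are essentially the paper's. The paper takes the flat state $U_n=P_n/k_n$ with the same $k_n=\lceil 2^{nS}\rceil$ (its $r_n$). It obtains $\Tr U_n^q\le\Tr\sigma_n^q$ from Jensen's inequality, which is equivalent to your $H_q\le S$. Only afterwards does it mix in $\zeta_n\pi_D$, controlling the norms by convexity and the defect by $|\mathcal K_u(W_n)-\mathcal K_u(U_n)|\le\zeta_n$. Your two-level state with $(1-\eta)^q+\eta^q\le1$ does the same job.

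\textbf{The key limit.} This is where your route genuinely differs. You locate the $k_n$-th eigenvalue at information level $nS+o(\sqrt n)$ by two-sided counting. The upper count is the trivial $uN_{\sigma_n}(u)\le1$. The lower count $N_{\sigma_n}(2^{-nS-a\sqrt n})\ge k_n$ for $a>0$ is not free, however. It needs a CLT lower bound on a window, e.g.\ $N_{\sigma_n}(u)\ge\Pr\{u\le y_J\le u'\}/u'$ with $u'=2^{-nS-a\sqrt n/2}$. You should supply this rather than cite type counting.

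The paper never locates $y_{k_n}$. At most $k_n$ eigenvalues are $\ge 1/k_n$. Among the top $k_n$ eigenvalues, those below $1/(nk_n)$ carry total mass at most $1/n$. Hence
$\Pr\{S_{\sigma_n}\le\log k_n\}\le\mathcal K_{k_n}(\sigma_n)\le\Pr\{S_{\sigma_n}\le\log k_n+\log n\}+1/n$,
and the CLT at two thresholds finishes. This is shorter and uses only the trivial counting bound.

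\textbf{The upper bound.} Your heuristic says $\mathcal K_k(\sigma_n)$ is near $1$ for $k\gg2^{nS}$ and $o(1)$ for $k\ll2^{nS}$. It says nothing about the window $k=2^{nS+b\sqrt n}$, where $\mathcal K_k(\sigma_n)\to\Phi(b/\sqrt{v_{d,p}})\in(0,1)$. So it does not yet give a bound uniform in $k$.

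Close it with the paper's convexity argument. Take $\eta_n\le\frac12$ and $D-k_n\ge k_n$, so the large block is on top. Then $u\mapsto\mathcal K_u(W_n)$ is linear on $[0,k_n]$ and on $[k_n,D]$, while $\mathcal K_u(\sigma_n)$ is concave. The difference is therefore convex on each piece and maximized at some $u\in\{0,k_n,D\}$. This gives exactly
$\delta_{\operatorname{maj}}(W_n,\sigma_n)=\max\{0,\,1-\eta_n-\mathcal K_{k_n}(\sigma_n)\}\to\frac12$.
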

\begin{proof}
Let $D:=d^n$ and $\tau:={\mathcal D}_{p,d}(\ket{0}\bra{0})$, so that
$S(\tau)=-\alpha_p\log \alpha_p-(d-1)\beta_p\log \beta_p$.
Put $r_n:=\lceil2^{nS(\tau)}\rceil$ and choose a rank-$r_n$ projection $P_n$
on the output space. This is possible for all sufficiently
large $n$ because $0<S(\tau)<\log d$.
First consider $U_n:=P_n/r_n$.
For the surprisal sum $S_n$ in the proof of Theorem~\ref{thm:approximation},
specialized to $\sigma=\tau$,
$\E S_n=nS(\tau)$ and $\Var S_n=nv_{d,p}>0$.
Jensen's inequality gives, for every $q>1$,
\begin{align}
 \Tr(\tau^{\otimes n})^q
 =\E 2^{-(q-1)S_n}
 \ge2^{-(q-1)nS(\tau)}
 \ge r_n^{1-q}
 =\Tr U_n^q.
 \label{eq:flat-state-norms}
\end{align}
At $q=1$ both norms are one. The operator-norm inequality follows
by letting $q\to\infty$.

We next show that $\mathcal K_{r_n}(\tau^{\otimes n})\to1/2$.
There are at most $r_n$ eigenvalues at least $1/r_n$.
Among the largest $r_n$ eigenvalues, those below $1/(nr_n)$ have
total mass at most $1/n$. Thus
\begin{align}
 \Pr\{S_n\le\log r_n\}
 \le \mathcal K_{r_n}(\tau^{\otimes n})
 \le \Pr\{S_n\le\log r_n+\log n\}+\frac1n.
 \label{eq:entropy-rank-partial-sum}
\end{align}
Since $\log r_n=nS(\tau)+o(1)$ and $\log n=o(\sqrt n)$,
both probabilities converge to $1/2$ by the central limit theorem.
For $0\le u\le r_n$, the function
$\mathcal K_u(U_n)-\mathcal K_u(\tau^{\otimes n})=u/r_n-\mathcal K_u(\tau^{\otimes n})$
is convex and hence attains its maximum at an endpoint.
For $r_n\le u\le D$, it equals $1-\mathcal K_u(\tau^{\otimes n})$
and is nonincreasing. Consequently
\begin{align}
 \delta_{\operatorname{maj}}(U_n,\tau^{\otimes n})
 =1-\mathcal K_{r_n}(\tau^{\otimes n})\longrightarrow\frac12.
 \label{eq:flat-state-defect}
\end{align}

To obtain full rank, take any $0<\zeta_n<1$ with $\zeta_n\to0$ and set
\begin{align}
 W_n:=(1-\zeta_n)U_n+\zeta_n\pi_D.
 \label{eq:full-rank-flat-state}
\end{align}
Because $r_n\le D$, one has $\|\pi_D\|_q\le\|U_n\|_q$ for
$1\le q\le\infty$.
Convexity of the Schatten norms therefore gives
\begin{align}
 \|W_n\|_q
 \le(1-\zeta_n)\|U_n\|_q+\zeta_n\|\pi_D\|_q
 \le\|U_n\|_q\le\|\tau^{\otimes n}\|_q.
 \label{eq:full-rank-norms}
\end{align}
The uniform component preserves the eigenvalue ordering, so
\begin{align}
 \mathcal K_u(W_n)=(1-\zeta_n)\mathcal K_u(U_n)+\zeta_nu/D,
 \qquad 0\le u\le D.
 \label{eq:full-rank-partial-sums}
\end{align}
It follows that $|\mathcal K_u(W_n)-\mathcal K_u(U_n)|\le\zeta_n$ uniformly in $u$,
and hence
\begin{align}
 \bigl|\delta_{\operatorname{maj}}(W_n,\tau^{\otimes n})
       -\delta_{\operatorname{maj}}(U_n,\tau^{\otimes n})\bigr|\le\zeta_n.
 \label{eq:full-rank-defect-continuity}
\end{align}
This completes the proof.
\end{proof}

\end{appendices}


\end{document}